\def\qed{\rule{0.4em}{1.4ex}}
\newcommand{\pat}{\pi}
\newcommand{\Path}{\Pi}
\newcommand{\Prb}{\mathbb{P}}
\newcommand{\Inf}{\mathrm{Inf}}
\newcommand{\Reach}{{\text{\textrm{Reach}}}}
\newcommand{\Pat}{\Pi}
\newcommand{\Buchi}{\textrm{B\"uchi}}
\newcommand{\Safe}{\textrm{Safe}}
\newcommand{\coBuchi}{\textrm{coB\"uchi}}
\newcommand{\LimitAvg}{\textrm{LimitAvg}}
\newcommand{\Exp}{\mathbb{E}}
\newcommand{\nats}{\mathbb{N}}
\newcommand{\set}[1]{\{\: #1 \:\}}
\newcommand{\vare}{\varepsilon}
\newcommand{\ov}{\overline}
\newcommand{\wh}{\widehat}
\newcommand{\slopefrac}[2]{\leavevmode\kern.1em
  \raise .5ex\hbox{\the\scriptfont0 #1}\kern-.1em
  /\kern-.15em\lower .25ex\hbox{\the\scriptfont0 #2}}
\newcommand{\half}{\slopefrac{1}{2}}
\begin{document}
\title{The Decidability Frontier for Probabilistic Automata on Infinite Words\protect\footnote[1]{This research was supported 
by the European Union project COMBEST, the European Network of Excellence
ArtistDesign, the Austrian Science Fund (FWF) NFN Grant No S11407-N23 (RiSE), and a Microsoft Faculty Fellowship}$^,$\protect\footnote[2]{This paper is an extension of the article \emph{Probabilistic Automata on Infinite Words: Decidability and Undecidability Results}, which appeared in the Proceedings of ATVA'2010. pp.1~16.}}
\author{Krishnendu Chatterjee \and Thomas A. Henzinger \and Mathieu Tracol}
\institute{IST Austria (Institute of Science and Technology Austria)}
\maketitle
\newif
  \iflong
  \longfalse
\newif
  \ifshort
  \shorttrue

\thispagestyle{empty}
\vspace{-2em}
\begin{abstract}
We consider probabilistic automata on infinite words with 
acceptance defined by safety, reachability, B\"uchi, coB\"uchi, and 
limit-average conditions.
We consider quantitative and qualitative decision problems. 
We present extensions and adaptations of proofs for probabilistic 
finite automata and present a complete characterization of 
the decidability and undecidability frontier of the 
quantitative and qualitative decision problems for probabilistic automata 
on infinite words.
\end{abstract}
\newcommand{\alphab}{\Sigma}
\newcommand{\Cone}{\mathrm{Cone}}
\newcommand{\ii}{\iota}
\newcommand{\first}{\mathsf{first}}

\section{Introduction}

\noindent{\bf Probabilistic automata and decision problems.}
Probabilistic automata for finite words were introduced in the 
seminal work of Rabin~\cite{RabinProb63}, and have been extensively studied 
(see the book by~\cite{PazBook} on probabilistic automata and the survey 
of~\cite{Bukharaev}).
Probabilistic automata on infinite words have been studied recently 
in the context of verification~\cite{BG05,BBG08}.
We consider probabilistic automata on infinite words with 
acceptance defined by safety, reachability, B\"uchi, 
coB\"uchi, and limit-average conditions.
We consider the \emph{quantitative} and \emph{qualitative} decision 
problems~\cite{PazBook,GO-Tech}.
The quantitative decision problems ask, given a rational $0 \leq 
\lambda \leq 1$, whether
(a) \emph{(equality)} 
there is a word with acceptance probability 
exactly $\lambda$;
(b) \emph{(existence)} there is a word with acceptance probability 
greater than $\lambda$; and 
(c) \emph{(value)} for all $\vare>0$, 
there is a word with acceptance probability greater than $\lambda-\vare$.
The qualitative decision problems are the special cases of the 
quantitative problems with $\lambda \in \set{0,1}$.
The qualitative and quantitative decision problems for probabilistic 
automata are the generalization of the emptiness and universality problem 
for deterministic automata.

\smallskip\noindent{\bf Known results for probabilistic automata on infinite
words.}
The decision problems for probabilistic automata on finite words have been 
extensively studied~\cite{PazBook,Bukharaev}.
For probabilistic automata on infinite words it follows from the 
results of~\cite{BBG08} that for the coB\"uchi acceptance condition, 
the qualitative equality problem is undecidable and the qualitative existence
problem is decidable (in EXPTIME), whereas for the B\"uchi acceptance condition, the 
qualitative equality problem is decidable (in EXPTIME) and the qualitative existence
problem is undecidable. 
The qualitative equality and existence problems for safety conditions were shown to 
be PSPACE-complete in~\cite{chadha2009expressiveness}, and the qualitative equality 
problem for B\"uchi conditions and qualitative existence problem for coB\"uchi 
conditions were shown to be PSPACE-complete in~\cite{chadha2009power}.  
The arithmetic hierarchy characterization of the the undecidable problems was 
established in~ \cite{chadha2009expressiveness,chadha2009power}.

\smallskip\noindent{\bf Our results: quantitative decision problems.} 
In~\cite{GO-Tech} a simple and elegant proof for the undecidability of the 
quantitative decision problems for probabilistic \emph{finite} automata 
was given.
In Section~3 we show that the proof of~\cite{GO-Tech} can be extended to show 
that the quantitative decision problems are undecidable for  safety, 
reachability, B\"uchi, coB\"uchi, and limit-average conditions.
In particular we show the undecidability of the quantitative equality 
problem for the special classes of absorbing probabilistic automata with safety and 
reachability acceptance conditions.
The other undecidability results for quantitative decision problems are 
obtained by simple extensions of the result for quantitative equality.

\smallskip\noindent{\bf Our results: qualitative decision problems.}
In Section~4 we show that all  qualitative decision problems are 
decidable for probabilistic safety automata. 
We present a simple adaptation of a proof of~\cite{GO-Tech} to give
the precise characterization of the decidability and undecidability 
frontier for all qualitative decision problems for probabilistic 
reachability, B\"uchi, and coB\"uchi automata.
Concerning probabilistic limit-average automata, we show that 
the qualitative value problem is undecidable. We use the results of \cite{tracol2009recurrence} to prove that the qualitative equality is also undecidable, and that the qualitative existence is decidable.

\section{Definitions}
In this section we present definitions for probabilistic automata, notions
of acceptance for infinite words, and the decision problems.

\subsection{Probabilistic automata}

\smallskip\noindent{\bf Probabilistic automata.} 
A \emph{probabilistic automaton} $\cala$ is a tuple 
$(Q,q_\ii,\alphab,\calm)$ that consists of 
the following components:
\begin{enumerate}
\item a finite set $Q$ of states and an initial state $q_\ii$; 
\item a finite alphabet $\alphab$; 
\item a set $\calm=\set{M_\sigma \mid \sigma \in \alphab}$ of transition 
probability matrices $M_\sigma$; i.e., for $\sigma \in \alphab$ we have 
$M_\sigma$ is a transition probability matrix.
In other words, for all $\sigma \in \alphab$ the following conditions hold: 
(a) for all $q,q' \in Q$ we have $M_\sigma(q,q') \geq 0$; and 
(b) for all $q \in Q$ we have $\sum_{q' \in Q} M_\sigma(q,q') =1$.
\end{enumerate}

\smallskip\noindent{\bf Infinite paths and words.}
Given a probabilistic automaton $\cala$, an infinite path 
$\pat=(q_0,q_1,q_2,\ldots)$ is an infinite sequence of states such 
that for all $i \geq 0$, there exists $\sigma \in \alphab$ 
such that $M_\sigma(q_i,q_{i+1}) >0$.
We denote by $\pat$ a path in $\cala$, and by $\Pat$ the 
set of all paths.
For a path $\pat$, we denote by $\Inf(\pat)$ the set of 
states that appear infinitely often in $\pat$.
An infinite (resp. finite) word is an infinite (resp. finite) sequence of 
letters from $\alphab$.
For a finite word $w$ we denote by $|w|$ the length of $w$.
Given a finite or an infinite word $w$, we denote $w_i$ 
as the $i$-th letter of the word (for a finite word $w$ we assume 
$i \leq |w|$).

\smallskip\noindent{\bf Cones and probability measure.} 
Given a probabilistic automaton $\cala$ and a finite sequence 
$\ov{q}=(q_0,q_1,\ldots,q_n)$ of states, the 
set $\Cone(\ov{q})$ consists of the set of paths $\pat$ with 
prefix $\ov{q}$.
Given a word $w \in \alphab^\omega$, we first define a measure 
$\mu^w$ on cones as follows: 
\begin{enumerate}
\item $\mu^w(\Cone(q_\ii))=1$ and for $q_i \neq q_{\ii}$ we have
$\mu^w(\Cone(q_i))=0$;

\item for a sequence $\ov{q}=(q_0,q_1,\ldots,q_{n-1},q_n)$ 
we have 
\[
\mu^w(\Cone(\ov{q}))= \mu^w(\Cone(q_0,q_1,\ldots,q_{n-1})) \cdot 
M_{w_n}(q_{n-1},q_n).
\]
\end{enumerate}
The probability measure $\Prb^w_\cala$ is the unique extension 
of the measure $\mu^w$ to the set of all measurable paths in 
$\cala$.

\subsection{Acceptance conditions}

\smallskip\noindent{\bf Probabilistic automata on finite words.}
A probabilistic \emph{finite} automaton consists of a probabilistic 
automaton and a set $F$ of final states. 
The automaton runs over finite words $w \in \alphab^*$ and for a 
finite word $w=\sigma_0 \sigma_1 \ldots \sigma_n \in\alphab^*$ 
it defines a probability distribution over $Q$ as follows: 
let $\delta_0(q_\ii)=1$ and for $i \geq 0$ we have 
$\delta_{i+1}=\delta_i \cdot M_{\sigma_i}$.
The acceptance probability for the word $w$, denoted as 
$\Prb_\cala(w)$, is $\sum_{q \in F} \delta_{|w|+1}(q)$.

\smallskip\noindent{\bf  Acceptance for infinite words.} 
Let $\cala$ be a probabilistic automaton and let $F \subseteq Q$ be a 
set of accepting (or target) states. 
Then we consider the following functions to assign values to paths.
\begin{enumerate}

\item \emph{Safety condition.} 
The safety condition $\Safe(F)$ defines the set of paths in $\cala$ 
that only visits states in $F$; i.e., 
$\Safe(F) =\set{(q_0,q_1,\ldots) \mid \forall i \geq 0. \ q_i \in F}$.

\item \emph{Reachability condition.}
The reachability condition $\Reach(F)$ defines the set of paths in 
$\cala$ that visits states in $F$ at least once; i.e., 
$\Reach(F) =\set{(q_0,q_1,\ldots) \mid \exists i \geq 0. \ q_i \in F}$.

\item \emph{B\"uchi condition.} 
The B\"uchi condition $\Buchi(F)$ defines the set of paths in 
$\cala$ that  visits states in $F$ infinitely often; i.e., 
$\Buchi(F) =\set{\pat \mid \Inf(\pat) \cap F \neq \emptyset}$.

\item \emph{coB\"uchi condition.}
The coB\"uchi condition $\coBuchi(F)$ defines the set of paths in 
$\cala$ that  visits states outside $F$ finitely often; i.e., 
$\coBuchi(F) =\set{\pat \mid \Inf(\pat) \subseteq F}$.

\item \emph{Limit-average condition.} The limit-average condition 
is a function $\LimitAvg: \Path \to \reals$ that assigns to 
a path the long-run average frequency of the accepting states.
Formally, for a state $q \in Q$, let $r(q)=1$ if $q \in F$ and $0$ otherwise,
then for a path $\pat=(q_0,q_1,\ldots)$ we have
\[
\LimitAvg(\pat)=\lim\inf_{k \to \infty} \frac{1}{k}\cdot 
\sum_{i=0}^{k-1} r(q_i).
\]
\end{enumerate}
In sequel, we will consider $\Reach,\Safe,\Buchi,\coBuchi$ and 
$\LimitAvg$ as functions from $\Path$ to $\reals$. 
Other than $\LimitAvg$, all the other functions only returns boolean values 
(0 or~1).
Given an condition $\Phi: \Path \to \reals$, 
a probabilistic automaton $\cala$ and a word $w$, we denote 
by $\Exp^w_\cala(\Phi)$ the expectation of the function $\Phi$ 
under the probability measure $\Prb^w_\cala$.
Given a probabilistic automaton $\cala$ and a condition $\Phi$, 
we use the following notation: $\cala(\Phi,w)=\Exp^w_\cala(\Phi)$,
and if the condition $\Phi$ is clear from the context we simply
write $\cala(w)$.
If $\Phi$ is boolean, then $\cala(\Phi,w)$ is the acceptance probability 
for the word $w$ for the condition $\Phi$ in $\cala$.

\subsection{Decision problems}
We now consider the quantitative and qualitative decision problems.

\smallskip\noindent{\bf Quantitative decision problems.}
Given a probabilistic automaton $\cala$, a condition $\Phi$, 
and a rational number $0< \lambda < 1$, we consider the following 
questions:
\begin{enumerate}

\item {\em Quantitative equality problem.} 
Does there exist a word $w$ such that $\Exp^w_\cala(\Phi)=\lambda$.
If $\Phi$ is boolean, then the question is whether there exists a 
word $w$ such that $\Prb^w_\cala(\Phi)=\lambda$.

\item {\em Quantitative existence problem.} 
Does there exist a word $w$ such that $\Exp^w_\cala(\Phi) > \lambda$.
If $\Phi$ is boolean, then the question is whether there exists a 
word $w$ such that $\Prb^w_\cala(\Phi) > \lambda$.
This question is related to emptiness of probabilistic automata:
let $\call_\cala(\Phi,\lambda)=\set{w \mid \Prb_\cala^w(\Phi)> \lambda}$
be the set of words with acceptance probability greater than $\lambda$;
then the set is non-empty iff the answer to the quantitative
existence problem is yes.

\item {\em Quantitative value problem.}
Whether the supremum of the values for all words is greater than 
$\lambda$, i.e., $\sup_{w \in \alphab^\omega} \Exp^w_\cala(\Phi) > \lambda$.
If $\Phi$ is boolean, this question is equivalent to whether for all
$\vare>0$, does there exist a word $w$ such that $\Prb^w_\cala(\Phi)
> \lambda -\vare$.


\end{enumerate}

\smallskip\noindent{\bf Qualitative decision problems.}
Given a probabilistic automaton $\cala$, a condition $\Phi$, 
we consider the following questions:
\begin{enumerate}
\item {\em Almost problem.} Does there exist a word $w$ such 
that $\Exp_\cala^w(\Phi)=1$.

\item {\em Positive problem.} Does there exist a word $w$ such 
that $\Exp_\cala^w(\Phi)>0$.

\item {\em Limit problem.} For all $\vare>0$, does there exist 
a word $w$ such that $\Exp_\cala^w(\Phi)>1-\vare$.
\end{enumerate}
If $\Phi$ is boolean, then in all the above questions $\Exp$ is replaced
by $\Prb$.

\section{Undecidability of Quantitative Decision Problems}
In this section we study the computability of the quantitative decision 
problems. We show undecidability results for special classes of 
probabilistic automata for safety and reachability conditions, and all
other results are derived from the results on these special classes.
The special classes are related to the notion of absorption in probabilistic
automata.

\subsection{Absorption in probabilistic automata}
We will consider several special cases with absorption condition 
and consider some simple equivalences.

\smallskip\noindent{\bf Absorbing states.}
Given a probabilistic automaton $\cala$, a state $q$ is 
\emph{absorbing} if for all $\sigma \in \alphab$ we 
have $M_\sigma(q,q)=1$ (hence for all $q' \neq q$ we 
have $M_\sigma(q,q')=0$).

\smallskip\noindent{\bf Acceptance absorbing automata.}
Given a probabilistic automaton $\cala$ let $F$ be the set of 
accepting states. 
The automaton is \emph{acceptance-absorbing} if all states in 
$F$ are absorbing.
Given an acceptance-absorbing automaton, the following equivalence hold:
$\Reach(F)=\Buchi(F) =\coBuchi(F)=\LimitAvg(F)$.
Hence our goal would be to show hardness (undecidability) 
for acceptance-absorbing automata with reachability condition, 
and the results will follow for B\"uchi, coB\"uchi and 
limit-average conditions.

\smallskip\noindent{\bf Absorbing automata.}
A probabilistic automaton $\cala$ is absorbing if the following 
condition holds: 
let $C$ be the set of absorbing states in $\cala$, then for 
all $\sigma \in \alphab$ and for all $q \in (Q \setminus C)$ 
we have $M_\sigma(q,q')>0$ for some $q'\in C$.
In sequel we will use $C$ for absorbing states in a probabilistic
automaton.

\subsection{Absorbing safety automata}
In sequel we write automata (resp. automaton) to denote 
probabilistic automata (resp. probabilistic automaton) unless
mentioned otherwise.
We now prove some simple properties of 
absorbing automata with safety condition.

\begin{lemma}
Let $\cala$ be an absorbing automaton with $C$ as the 
set of absorbing states.
Then for all words $w \in \alphab^\omega$ we 
have $\Prb^w_\cala(\Reach(C))=1$.
\end{lemma}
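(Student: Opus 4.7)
The plan is to exploit the absorbing hypothesis to extract a uniform lower bound on the one-step probability of entering $C$ from anywhere outside $C$, and then to show by a standard Borel–Cantelli style estimate that the complement event (staying outside $C$ forever) has probability zero.

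First I would define
\[
p_{\min} \;=\; \min_{q \in Q \setminus C,\ \sigma \in \alphab}\ \sum_{q' \in C} M_\sigma(q,q').
\]
Because $Q$ and $\alphab$ are finite and the minimum is taken over a finite set of strictly positive quantities (positivity being exactly the absorbing assumption), we get $p_{\min} > 0$. Moreover, since every state of $C$ is absorbing, once a path enters $C$ it stays in $C$ forever; hence the event ``reach $C$'' is just the complement of the event $E = \Safe(Q \setminus C)$, i.e.\ paths that remain in $Q \setminus C$ at every step.

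Next I would bound $\Prb^w_\cala(E)$ through its finite approximations. For $n \geq 0$, let $E_n$ be the cylinder event ``the first $n{+}1$ states lie in $Q \setminus C$''. Then $E = \bigcap_n E_n$ is a decreasing intersection, so $\Prb^w_\cala(E) = \lim_{n \to \infty} \Prb^w_\cala(E_n)$. A straightforward induction on $n$, summing the cone measures defined in Section~2 and using the definition of $p_{\min}$ at the last step, yields
\[
\Prb^w_\cala(E_n) \;\leq\; (1 - p_{\min})^{n}.
\]
Since $p_{\min} > 0$, the right-hand side tends to $0$, so $\Prb^w_\cala(E) = 0$ and therefore $\Prb^w_\cala(\Reach(C)) = 1$, regardless of the choice of $w$.

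The main thing to be careful about is the inductive step: one has to pass correctly from the cone-based definition of $\mu^w$ to a conditional probability statement, namely that for any non-absorbing state $q$ and any letter $\sigma$, at least a $p_{\min}$ fraction of the outgoing mass lands in $C$. Once this is cleanly stated, the rest is an exponential decay argument that does not depend on the specific word $w$, which is exactly what is needed for a statement universally quantified over all $w \in \alphab^\omega$.
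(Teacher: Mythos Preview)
Your proposal is correct and follows essentially the same approach as the paper: extract a uniform positive lower bound on the one-step probability of entering $C$ from outside $C$, and conclude that the probability of avoiding $C$ for $n$ steps decays like $(1-p_{\min})^n$. The paper's argument is terser (it states the bound $1-(1-\eta)^n$ directly without spelling out the cone-level induction), but the content is the same.
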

\begin{proof}
Let $\eta= \min_{q \in Q, \sigma \in \alphab} \sum_{q' \in C}
M_\sigma(q,q')$ be the minimum transition probability to  
absorbing states.
Since $\cala$ is absorbing we have $\eta>0$.
Hence  for any word $w$, the probability to 
reach $C$ after $n$ steps is at least $1-(1-\eta)^n$.
Since $\eta>0$, as $n \to \infty$ we have $1-(1-\eta)^n \to 1$,
and hence the result follows.
\qed
\end{proof}

\smallskip\noindent{\bf Complementation of absorbing safety automata.}
Let $\cala$ be an absorbing automaton, with the set $F$ as accepting states,
and we consider the safety condition $\Safe(F)$. 
Without loss of generality we assume that every state in 
$Q \setminus F$ is absorbing.
Otherwise, if a state in $q \in Q \setminus F$ is non-absorbing,
we transform it to an absorbing state and obtain an automaton $\cala'$.
It is easy to show that for the safety condition $\Safe(F)$ the automata 
$\cala$ and $\cala'$ coincide 
(i.e., for all words $w$ we have $\cala(\Safe(F),w)=\cala'(\Safe(F),w)$).
Hence we consider an absorbing safety automaton such that all states in 
$Q\setminus F$ are absorbing: it follows that every non-absorbing state
is an accepting state, i.e., $Q \setminus C \subseteq F$.
We claim that for all words $w$ we have 
\[
\Prb^w_\cala(\Safe(F))= \Prb^w_\cala(\Reach(F \cap C)).
\]
Since every state in $Q \setminus C \subseteq F$ and all states 
in $C$ are absorbing, it follows that 
$\Prb^w_\cala(\Safe(F)) \geq \Prb^w_\cala(\Reach(F \cap C))$.
Since $\cala$ is absorbing, for all words $w$ we have 
$\Prb^w_\cala(\Reach(C))=1$ and hence it follows that 
\[
\Prb^w_\cala(\Safe(F)) \leq \Prb^w_\cala(\Reach(C)) \cdot \Prb^w_\cala(\Safe(F)\mid \Reach(C)) 
=\Prb^w_\cala(\Reach(F \cap C)).
\]
The complement automaton $\ov{\cala}$ is obtained from $\cala$ by changing the 
set of accepting states as follows: 
the set of accepting states $\ov{F}$ in $\ov{\cala}$ is 
$(Q \setminus C) \cup (C \setminus F)$, i.e., all non-absorbing states 
are accepting states, and for absorbing states the accepting states are switched.
Since $\ov{\cala}$ is also absorbing it follows that for all words $w$ we 
have  
$\Prb^w_{\ov{\cala}}(\Safe(\ov{F})) =\Prb^w_{\ov{\cala}}(\Reach(C \cap \ov{F})) 
= \Prb^w_{\ov{\cala}}(\Reach(C \setminus F))$.
Since $\ov{\cala}$ is absorbing, it follows that for all words we have 
$\Prb^w_{\ov{\cala}}(\Reach(C))=1$ and hence 
$\Prb^w_{\ov{\cala}}(\Safe(\ov{F}))=\Prb^w_{\ov{\cala}}(\Reach(C \setminus F)) = 
1-\Prb^w_{\ov{\cala}}(\Reach(C\cap F))= 1-\Prb^w_{\cala}(\Reach(C\cap F))$.
It follows that for all words $w$ we have 
$\cala(\Safe(F),w) =1-\ov{\cala}(\Safe(\ov{F}),w)$, i.e., 
$\ov{\cala}$ is the complement of $\cala$.

\begin{remark}
It follows from above that an absorbing automaton with safety 
condition can be transformed to an acceptance-absorbing automaton
with reachability condition. 
So any hardness result for absorbing safety automata also gives 
a hardness result for acceptance-absorbing reachability automata 
(and hence also for B\"uchi, coB\"uchi, limit-average automata).
\end{remark}

\subsection{Undecidability for safety and acceptance-absorbing reachability automata}

Our first goal is to show that the quantitative equality problem is 
undecidable for safety and acceptance-absorbing reachability automata.
The reduction is from the Post Correspondence Problem (PCP). 
Our proof is inspired by and is an extension of a simple elegant proof 
of undecidability for probabilistic finite automata~\cite{GO-Tech}.
We first define the PCP and some related notations.

\smallskip\noindent{\bf PCP.} 
Let $\varphi_1,\varphi_2: \alphab \to \set{0,1,\ldots,k-1}^*$, and extended naturally 
to $\alphab^*$ (where $k=|\alphab|$).
The PCP  asks whether there is a word $w \in \alphab^*$ such that 
$\varphi_1(w)=\varphi_2(w)$.
The PCP is undecidable~\cite{HopcroftUllman}.

\smallskip\noindent{\em Notations.}
Let $\psi: \set{0,1,\ldots,k-1}^* \to [0,1]$ be the function defined as:
\[
\psi(\sigma_1 \sigma_2 \ldots \sigma_n) = 
\frac{\sigma_n}{k} + 
\frac{\sigma_{n-1}}{k^2} + 
 \cdots 
+ \frac{\sigma_2}{k^{n-1}}
+ \frac{\sigma_1}{k^n} \ .
\]
For $i \in \set{1,2}$, let $\theta_i=\psi \circ \varphi_i: \Sigma^* \to [0,1]$.
We first prove a property of $\theta_i$, that can be derived from the 
results of~\cite{Bertoni}.

\begin{lemma}\label{lemm:bert}
For a finite word $w \in \alphab^*$ and a letter $\sigma \in \alphab$ we have 
\[
\theta_i(w \cdot \sigma)= 
\theta_i(\sigma) + \theta_i(w) \cdot k_i(\sigma), 
\]
where $k_i(\sigma)=k^{-|\varphi_i(\sigma)|}$.
\end{lemma}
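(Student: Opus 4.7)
The plan is to reduce the claim to a simple multiplicative property of $\psi$ and then invoke the fact that $\varphi_i$ is a morphism, so that $\varphi_i(w\cdot\sigma)=\varphi_i(w)\cdot\varphi_i(\sigma)$. Concretely, I would first unfold the definition of $\theta_i$ to rewrite
\[
\theta_i(w\cdot\sigma) \;=\; \psi\bigl(\varphi_i(w)\cdot\varphi_i(\sigma)\bigr),
\]
and then prove the following purely arithmetic lemma about $\psi$: for any two finite words $s=s_1\ldots s_n$ and $t=t_1\ldots t_m$ over $\{0,1,\ldots,k-1\}$,
\[
\psi(s\cdot t) \;=\; \psi(t) \;+\; k^{-|t|}\cdot \psi(s).
\]

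To establish this identity I would compute $\psi(s\cdot t)$ directly from the definition. The last $m$ letters of $s\cdot t$ are the letters of $t$, contributing $t_m/k + t_{m-1}/k^2 + \cdots + t_1/k^m = \psi(t)$. The first $n$ letters are those of $s$, but now shifted to occupy positions $m+1,\ldots,m+n$ (counted from the right), so they contribute
\[
\frac{s_n}{k^{m+1}} + \frac{s_{n-1}}{k^{m+2}} + \cdots + \frac{s_1}{k^{m+n}} \;=\; k^{-m}\cdot \psi(s).
\]
Summing the two contributions yields the claim; this is essentially the standard base-$k$ positional-number identity, so there is no real obstacle here beyond bookkeeping of indices.

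Finally, I would instantiate this identity with $s=\varphi_i(w)$ and $t=\varphi_i(\sigma)$, giving
\[
\theta_i(w\cdot\sigma) \;=\; \psi(\varphi_i(\sigma)) \;+\; k^{-|\varphi_i(\sigma)|}\cdot \psi(\varphi_i(w)) \;=\; \theta_i(\sigma) \;+\; k_i(\sigma)\cdot \theta_i(w),
\]
using the definitions $\theta_i=\psi\circ\varphi_i$ and $k_i(\sigma)=k^{-|\varphi_i(\sigma)|}$. The only subtlety worth flagging is the orientation of the encoding: since in the definition of $\psi$ the leftmost letter receives the smallest weight $1/k^n$ and the rightmost the largest weight $1/k$, appending a suffix $t$ rescales the prefix contribution by $k^{-|t|}$ rather than shifting it upward; getting the indices and the factor $k^{-|\varphi_i(\sigma)|}$ aligned correctly is the only place where a sign error could creep in, but this is routine.
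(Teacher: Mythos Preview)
Your proof is correct and follows essentially the same approach as the paper: the paper unfolds $\theta_i(w\cdot\sigma)=\psi(\varphi_i(w)\cdot\varphi_i(\sigma))$, writes out the base-$k$ expansion explicitly, and regroups the sum into $\psi(\varphi_i(\sigma))+k^{-|\varphi_i(\sigma)|}\psi(\varphi_i(w))$, which is exactly your computation with $s=\varphi_i(w)$ and $t=\varphi_i(\sigma)$. The only cosmetic difference is that you factor the positional identity $\psi(s\cdot t)=\psi(t)+k^{-|t|}\psi(s)$ out as a standalone sublemma, whereas the paper performs the same manipulation inline.
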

\begin{proof}
Let $w=\sigma_1 \sigma_2 \ldots \sigma_n$, and let 
$\varphi_i(w)=b_1 b_2 \ldots b_m$ 
and $\varphi_i(\sigma)= a_1 a_2 \ldots a_\ell$.
Then we have 
\[
\begin{array}{rcl}
\theta_i(w \cdot \sigma) & = & \psi \circ \varphi_i(w \cdot \sigma) \\[1ex]
 & = & 
\displaystyle 
\psi ( b_1 b_2 \ldots b_m a_1 a_2 \ldots a_\ell) \\[1ex]
 & = & 
\displaystyle 
\frac{a_\ell}{k} + \cdots + \frac{a_1}{k^\ell} + 
\frac{b_m}{k^{\ell+1}} + \cdots + \frac{b_1}{k^{\ell+m}} \\[1ex]
 & = & 
\displaystyle 
\bigg(\frac{a_\ell}{k} + \cdots + \frac{a_1}{k^\ell}\bigg) + 
\frac{1}{k^\ell} \cdot \bigg(\frac{b_m}{k} + \cdots + \frac{b_1}{k^{m}} \bigg) \\[1ex]
& = & 
\displaystyle 
\psi \circ \varphi_i(\sigma) + \frac{1}{k^\ell} \cdot( \psi\circ \varphi_i(w)) \\[1ex]
& = & \theta_i(\sigma) + \theta_i(w) \cdot k_i(\sigma).
\end{array} 
\]
The result follows.
\qed
\end{proof}

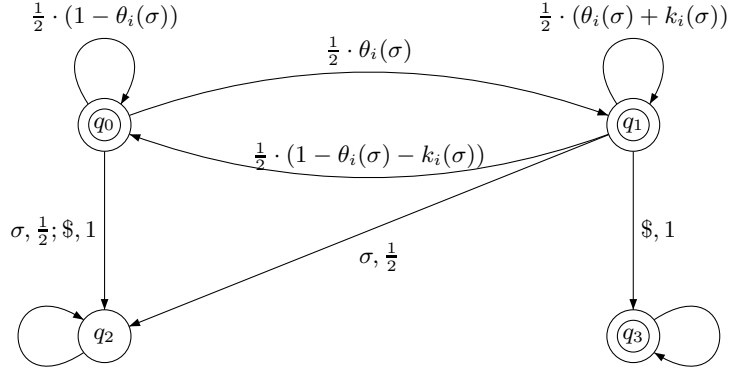
\begin{figure}[!tb]
  \begin{center}
    \unitlength=4pt
    \begin{picture}(25,25)(0,-20)
    \gasset{Nw=5,Nh=5,Nmr=2.5,curvedepth=5}
    \thinlines
    \node[Nmarks=i,iangle=90,Nmarks=r,rdist=1](A1)(-10,0){$q_0$}
    \drawloop[loopangle=90](A1){$\frac{1}{2}\cdot(1-\theta_i(\sigma))$}
    \node[Nmarks=r,rdist=1](A2)(40,0){$q_1$}
    \drawloop[loopangle=90](A2){$\frac{1}{2}\cdot(\theta_i(\sigma)+k_i(\sigma))$}
    \node(A3)(-10,-20){$q_2$}
    \node[Nmarks=r,rdist=1](A4)(40,-20){$q_3$}
    \drawloop[loopangle=180](A3){}
    \drawloop[loopangle=0](A4){}
    \drawedge(A1,A2){$\frac{1}{2}\cdot \theta_i(\sigma)$}
    \drawedge[ELside=r](A2,A1){$\frac{1}{2}\cdot(1-\theta_i(\sigma)-k_i(\sigma))$}
    \gasset{curvedepth=0}
    \drawedge(A2,A3){$\sigma,\frac{1}{2}$}
    \drawedge(A2,A4){$\$,1$}
    \drawedge[ELside=r](A1,A3){$\sigma,\frac{1}{2}; \$,1$}
    \end{picture}
  \end{center}
  \caption{Absorbing safety automata from PCP instances.}\label{fig:safe}
\end{figure}

\smallskip\noindent{\bf The absorbing safety automata from PCP instances.} 
Given an instance of the PCP problem we create two safety automata 
$\cala_i=(Q,\alphab \cup \set{\$},\calm^i,q_\ii,F)$, 
for $i \in \set{1,2}$ as follows (we assume $\$\not\in \alphab$): 
\begin{enumerate}
\item \emph{(Set of states and initial state).} $Q=\set{q_0,q_1,q_2,q_3}$ and $q_\ii=q_0$;
\item \emph{(Accepting states).} $F=\set{q_0,q_1,q_3}$;
\item \emph{(Transition matrices).} The set of transition matrices is as follows:
\begin{enumerate}
\item the states $q_2$ and $q_3$ are absorbing;
\item $M^i_{\$}(q_0,q_2)=1$ and $M^i_{\$}(q_1,q_3)=1$; 
\item for all $\sigma \in \alphab$ we have 
\begin{enumerate}
\item $M^i_\sigma(q_0,q_2)=M^i_\sigma(q_1,q_2)=\frac{1}{2}$;
\item $M^i_\sigma(q_0,q_0)= \frac{1}{2}\cdot(1-\theta_i(\sigma))$; 
$M^i_\sigma(q_0,q_1)=\frac{1}{2}\cdot \theta_i(\sigma)$;
\item $M^i_\sigma(q_1,q_0)= \frac{1}{2}\cdot(1-\theta_i(\sigma)-k_i(\sigma))$; 
$M^i_\sigma(q_1,q_1)=\frac{1}{2}\cdot(\theta_i(\sigma)+k_i(\sigma))$;
\end{enumerate}
\end{enumerate} 
\end{enumerate}
A pictorial description is shown in Fig~\ref{fig:safe}.
We will use the following notations:
(a)~we use $\wh{\alphab}$ for $\alphab \cup \set{\$}$;
(b)~for a word $w \in \wh{\alphab}^\omega$ if the word contains a $\$$,
then we denote by $\first(w)$ the prefix $w'$ of $w$ that does not 
contain a $\$$ and the first $\$$ in $w$ occurs immediately after $w'$ 
(i.e., $w'\$$ is a prefix of $w$).
We now prove some properties of the automata $\cala_i$.
\begin{enumerate}
\item The automata $\cala_i$ is absorbing: since for every state and 
every letter the transition probability to the set $\set{q_2,q_3}$ 
is at least $\frac{1}{2}$; and $q_2$ and $q_3$ are absorbing.

\item Consider a word $w \in \wh{\alphab}^\omega$. 
If the word contains no $\$$, then the state $q_2$ is reached with 
probability~1 (as every round there is at least probability $\frac{1}{2}$ 
to reach $q_2$). 
If the word $w$ contains a $\$$, then as soon as the input letter is 
$\$$, then the set $\set{q_2,q_3}$ is reached with probability~1.
Hence the following assertion holds: the probability 
$\Prb^w_\cala(\Safe(F))$ is the probability that after the word $w'=\first(w)$ 
the current state is $q_1$. 
\end{enumerate}

\begin{lemma}\label{lemm:theta}
For all words $w \in \alphab^*$, the probability that in the 
automaton $\cala_i$ after reading $w$ 
(a)~the current state is $q_1$ is equal to $\frac{1}{2^{|w|}} \cdot \theta_i(w)$;
(b)~the current state is $q_0$ is equal to $\frac{1}{2^{|w|}} \cdot (1-\theta_i(w))$;
and (c)~the current state is $q_2$ is equal to $1-\frac{1}{2^{|w|}}$.
\end{lemma}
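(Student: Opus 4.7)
The plan is to prove all three parts simultaneously by induction on $|w|$, using Lemma~\ref{lemm:bert} to close the inductive step. The base case $w=\varepsilon$ is immediate: the automaton starts in $q_0$ with probability~$1$, and since $\theta_i(\varepsilon)=0$ and $2^{0}=1$, all three claimed expressions give the right values ($1$, $0$, $0$).

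For the inductive step, assume the formulas hold after reading $w$; I would denote by $p_0,p_1,p_2$ the three claimed probabilities and compute the probability of being in each state after reading $w\sigma$ by multiplying $(p_0,p_1,p_2)$ by the column of $M^i_\sigma$ corresponding to the target state. The transitions out of $q_2$ are self-loops, so (c) follows immediately: the mass put into $q_2$ at step $|w|+1$ is the old mass plus $\frac{1}{2}(p_0+p_1) = \frac{1}{2^{|w|+1}}$, giving $1-\frac{1}{2^{|w|+1}}$ as required. For (a), the new mass at $q_1$ equals
\[
p_0\cdot\tfrac{1}{2}\theta_i(\sigma) + p_1\cdot\tfrac{1}{2}(\theta_i(\sigma)+k_i(\sigma))
= \tfrac{1}{2^{|w|+1}}\bigl[(1-\theta_i(w))\theta_i(\sigma)+\theta_i(w)(\theta_i(\sigma)+k_i(\sigma))\bigr],
\]
and the bracket simplifies to $\theta_i(\sigma)+\theta_i(w)\cdot k_i(\sigma)$, which is exactly $\theta_i(w\sigma)$ by Lemma~\ref{lemm:bert}. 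Part (b) is analogous: expanding $p_0\cdot\tfrac{1}{2}(1-\theta_i(\sigma)) + p_1\cdot\tfrac{1}{2}(1-\theta_i(\sigma)-k_i(\sigma))$ and collecting terms yields $\tfrac{1}{2^{|w|+1}}(1-\theta_i(\sigma)-\theta_i(w)k_i(\sigma)) = \tfrac{1}{2^{|w|+1}}(1-\theta_i(w\sigma))$, again using Lemma~\ref{lemm:bert}.

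There is no real obstacle here; the computation is routine once one notices that the transition coefficients $\theta_i(\sigma)$ and $k_i(\sigma)$ out of $q_0,q_1$ were engineered precisely so that the algebraic identity from Lemma~\ref{lemm:bert} appears in the inductive step. The only small point to be careful about is that the mass $\frac{1}{2}$ that both $q_0$ and $q_1$ send to $q_2$ on every $\alphab$-letter is what produces the uniform shrinking factor $\frac{1}{2^{|w|}}$, and this is what makes the three quantities consistent (they sum to~$1$ at every length). Once parts (a)--(c) are established by induction, the lemma is proved.
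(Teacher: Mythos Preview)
Your proposal is correct and follows essentially the same approach as the paper: induction on $|w|$, with the inductive step for the $q_1$-probability closed by invoking Lemma~\ref{lemm:bert}. Your write-up is in fact slightly more complete, since you carry out part~(b) explicitly whereas the paper only computes~(a) and~(c) and leaves~(b) implicit.
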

\begin{proof} The result follows from induction on length of $w$, and 
the base case is trivial.
We prove the inductive case.
Consider a word $w\cdot \sigma$: by inductive hypothesis the probability that 
after reading $w$ the current state is 
$q_0$ is $\frac{1}{2^{|w|}}\cdot (1-\theta_i(w))$ and the current state is 
$q_1$ is $\frac{1}{2^{|w|}}\cdot \theta_i(w)$, and the current state is 
$q_2$ with probability $(1-\frac{1}{2^{|w|}})$.
After reading $\sigma$, if the current state is $q_0$ or $q_1$, 
with probability $\frac{1}{2}$ a transition to $q_2$ is made. 
Hence the probability to be at $q_2$ after $w\cdot \sigma$ is 
$(1-\frac{1}{2^{|w|+1}})$ and the rest of the probability is to be at either 
$q_0$ and $q_1$.
The probability to be at $q_1$ is 
\[
\begin{array}{rcl}
\frac{1}{2^{|w|+1}} \cdot \big(
(1-\theta_i(w))\cdot \theta_i(\sigma) 
+
\theta_i(w)\cdot (\theta_i(\sigma) + k_i(\sigma)
\big)
& = & 
\frac{1}{2^{|w|+1}} \cdot (\theta_i(\sigma) + \theta_i(w)\cdot k_i(\sigma)) \\
& = &
\frac{1}{2^{|w|+1}} \cdot \theta_i(w\cdot \sigma).
\end{array}
\]
The first equality follows by rearranging and the 
second equality follows from Lemma~\ref{lemm:bert}.
Hence the result follows.
\qed
\end{proof}

The following lemma is an easy consequence.

\begin{lemma}\label{lemm:theta_val}
For $i \in \set{1,2}$, for a word $w \in \wh{\alphab}^\omega$, 
(a)~if $w$ contains no $\$$, then $\cala_i(w)=0$;
(b)~if $w$ contains a $\$$,  let $w'=\first(w)$, then 
$\cala_i(w)=\frac{1}{2^{|w'|}}\cdot \theta_i(w')$. 
\end{lemma}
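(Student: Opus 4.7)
The plan is to combine Lemma~\ref{lemm:theta}, which describes the state distribution after reading an arbitrary finite word in $\alphab^*$, with the structural observation already recorded just before Lemma~\ref{lemm:theta}: the value $\cala_i(w) = \Prb^w_{\cala_i}(\Safe(F))$ equals the probability that the run is in state $q_1$ at the moment the first $\$$ of $w$ is read (or, if $w$ contains no $\$$, the probability of never leaving $F$ at all). Once this reduction is in hand, both parts reduce to reading off values from Lemma~\ref{lemm:theta}.

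For part~(a), suppose $w \in \alphab^\omega$ contains no $\$$. Lemma~\ref{lemm:theta}(c) applied to the length-$n$ prefix of $w$ gives that $q_2$ is occupied with probability $1 - 2^{-n}$. Since $q_2$ is absorbing and $q_2 \notin F$, any path that visits $q_2$ violates $\Safe(F)$ forever after. Letting $n \to \infty$ shows that $\Safe(F)$ has probability $0$, hence $\cala_i(w) = 0$.

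For part~(b), write $w = w' \cdot \$ \cdot u$ where $w' = \first(w) \in \alphab^*$ and $u \in \wh{\alphab}^\omega$. First, Lemma~\ref{lemm:theta}(a) gives that after reading $w'$ the probability of being in $q_1$ is exactly $\frac{1}{2^{|w'|}} \cdot \theta_i(w')$. Second, because $q_2$ is absorbing and is the only state outside $F$ reachable during the processing of $w'$, the event ``state $= q_1$ at time $|w'|$'' coincides (as a set of paths) with the event ``the run stays inside $\{q_0, q_1\} \subseteq F$ throughout times $0, \ldots, |w'|$ and ends in $q_1$''. Third, processing the $\$$: by construction $M^i_\$(q_1, q_3) = 1$ and $M^i_\$(q_0, q_2) = 1$, so the mass at $q_1$ is pushed into the accepting absorbing state $q_3$, the mass at $q_0$ and $q_2$ is pushed into the non-accepting absorbing state $q_2$, and the suffix $u$ can no longer change the distribution. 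Therefore a path satisfies $\Safe(F)$ iff it is in $q_1$ at time $|w'|$, which happens with probability $\frac{1}{2^{|w'|}} \cdot \theta_i(w')$.

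There is no real obstacle here; the only care needed is the set-theoretic identification in the middle step of part~(b), namely that reaching $q_1$ at time $|w'|$ automatically certifies safety over the prefix (and reaching $q_0$ at that time, followed by $\$$, automatically breaks safety). This is immediate from the fact that $q_2$ is the unique ``sink'' among the non-accepting states visited along $w'$, so I would just note it in one line rather than prove it separately.
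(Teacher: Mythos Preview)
Your proposal is correct and matches the paper's intended reasoning exactly: the paper states the lemma as ``an easy consequence'' of Lemma~\ref{lemm:theta} together with the structural observations recorded just before it (that without a $\$$ the run reaches $q_2$ almost surely, and with a $\$$ the acceptance probability equals the probability of being in $q_1$ after $\first(w)$). You have simply spelled out those two steps in detail, which is precisely what the paper leaves implicit.
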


\smallskip\noindent{\bf Constant automata and random choice automata.}
For any rational constant $\nu$ it is easy to construct an absorbing safety 
automaton $\cala$ that assigns value $\nu$ to all words.
Given two absorbing safety automata $\cala_1$ and $\cala_2$, and 
two non-negative rational numbers $\beta_1,\beta_2$ such that 
$\beta_1 + \beta_2=1$, it is easy to construct an automaton $\cala$ 
(by adding an initial state with initial randomization) 
such that for all words $w$ we have 
$\cala(w)=\beta_1\cdot\cala_1(w) + \beta_2\cdot\cala_2(w)$.
We will use the notation $\beta_1\cdot \cala_1 + \beta_2\cdot\cala_2$
for the automaton $\cala$.

\begin{figure}[!tb]
  \begin{center}
    \unitlength=4pt
    \begin{picture}(25, 10)(0,-5)
    \gasset{Nw=5,Nh=5,Nmr=2.5,curvedepth=0}
    \thinlines
    \node[Nmarks=i,iangle=90,Nmarks=r,rdist=1](A1)(0,0){$q_0$}
    \drawloop[loopangle=180](A1){$\alphab$}
    \node(A2)(25,0){$q_1$}
    \drawloop[loopangle=0](A2){$\wh{\alphab}$}
    \drawedge(A1,A2){$\$$}
    \end{picture}
  \end{center}
  \caption{Safety automaton $\cala_3$.}\label{fig:aut3}
\end{figure}
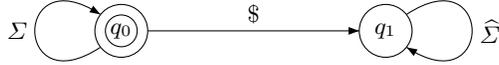

\smallskip\noindent{\bf Safety automaton $\cala_3$.} 
We consider a safety automaton $\cala_3$ on the alphabet 
$\wh{\alphab}$ as follows: 
(a)~for a word $w$ without any $\$$ the acceptance probability
is~1;
(b)~for a word with a $\$$ the acceptance probability is~0.
The automaton is shown in Fig~\ref{fig:aut3} and the only accepting 
state is $q_0$.
Consider the acceptance-absorbing reachability automaton $\cala_4$ 
as follows: the automaton is same as in Fig~\ref{fig:aut3} with 
accepting state as $q_1$: the automaton accepts a word with a $\$$ 
with probability~1, and for words with no $\$$ the acceptance 
probability is~0.

\begin{theorem}[Quantitative equality problem]\label{thrm1}
The quantitative equality problem is undecidable for probabilistic safety and 
acceptance-absorbing reachability automata.
\end{theorem}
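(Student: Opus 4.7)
The plan is to reduce from the Post Correspondence Problem (PCP). First I pre-process the PCP instance so that every image $\varphi_i(\sigma)$ begins with a nonzero digit; this transformation preserves undecidability, and since leading zeros of the input string are the only source of non-injectivity of $\psi$, after the pre-processing one has $\theta_1(u)=\theta_2(u)$ iff $\varphi_1(u)=\varphi_2(u)$ for every $u \in \alphab^+$. Next I reuse the absorbing safety automata $\cala_1, \cala_2$ of Fig.~\ref{fig:safe}, the absorbing safety automaton $\cala_3$ of Fig.~\ref{fig:aut3} (which accepts exactly the $\$$-free words), and one further three-state absorbing safety automaton $\cala_5$ that accepts exactly the words whose first letter is $\$$ (an accepting initial state with two absorbing successors, the accepting one reached on $\$$ and the rejecting one reached on every $\sigma \in \alphab$).

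Using the complementation construction for absorbing safety automata and the random-choice combinator developed above, I then form
\[
\cala \;=\; \tfrac14\,\cala_1 \;+\; \tfrac14\,\ov{\cala_2} \;+\; \tfrac14\,\cala_3 \;+\; \tfrac14\,\cala_5,
\]
which is again an absorbing safety automaton. Using Lemma~\ref{lemm:theta_val} together with $\ov{\cala_2}(w)=1-\cala_2(w)$, I read off $\cala(w)$ in the three possible shapes of $w \in \wh{\alphab}^\omega$: both for $w$ with no $\$$ and for $w$ whose first letter is $\$$ the value is exactly $\tfrac12$, whereas for $w = u\$v$ with $u \in \alphab^+$ one obtains
\[
\cala(w) \;=\; \tfrac14 \;+\; \frac{\theta_1(u)-\theta_2(u)}{4\cdot 2^{|u|}}.
\]

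Targeting the rational $\lambda = \tfrac14$, only words of the third shape can hit $\lambda$, and such a word attains it exactly when $\theta_1(u)=\theta_2(u)$, which by the pre-processing means exactly that $u$ solves the given PCP instance. This establishes undecidability of the quantitative equality problem for absorbing safety automata; the earlier remark, which converts any absorbing safety automaton into an acceptance-absorbing reachability automaton with the same acceptance probabilities, then transfers the undecidability to the reachability case. The main obstacle along the way is ruling out spurious solutions coming from the two degenerate shapes of $w$ (no $\$$, or $\$$ in first position), which is exactly what the additional building blocks $\cala_3$ and $\cala_5$ are engineered for; the secondary subtlety, the injectivity of $\psi$, is handled cleanly by the nonzero-first-digit pre-processing of PCP.
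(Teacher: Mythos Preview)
Your safety argument follows essentially the paper's route: combine $\cala_1$, $\ov{\cala_2}$, and a $\$$-detector via the random-choice combinator, and pick a rational $\lambda$ that is hit exactly when $\theta_1=\theta_2$ on the prefix before the first~$\$$. Your added gadget $\cala_5$ (ruling out words that begin with~$\$$) and the nonzero-first-digit pre-processing of the PCP instance in fact address two subtleties that the paper's own write-up glosses over, so for the safety half your proposal is correct and a touch more careful than the paper.

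The reachability transfer, however, has a genuine gap. You assert that $\cala=\tfrac14\cala_1+\tfrac14\,\ov{\cala_2}+\tfrac14\cala_3+\tfrac14\cala_5$ is an \emph{absorbing} safety automaton and then invoke the earlier Remark to convert it to an acceptance-absorbing reachability automaton. But $\cala_3$ of Fig.~\ref{fig:aut3} is not absorbing: from its state $q_0$, every $\sigma\in\alphab$ loops back to $q_0$ with probability~$1$, so on those letters no mass reaches an absorbing state. That state survives unchanged inside the combined automaton, hence $\cala$ is not absorbing and the Remark does not apply. (For the safety claim this overstatement is harmless --- the theorem only asks for a safety automaton, not an absorbing one --- but it is exactly the step you rely on for reachability.) The paper does not go through the Remark at this point: it treats reachability separately, replacing $\cala_3$ by the acceptance-absorbing reachability automaton $\cala_4$ (same transition structure, accepting state swapped) and shifting the target~$\lambda$. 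You can repair your argument the same way: convert the three genuinely absorbing components $\cala_1,\ov{\cala_2},\cala_5$ individually via the Remark, substitute $\cala_4$ for $\cala_3$, take the random choice of the resulting four acceptance-absorbing reachability automata, and target $\lambda=\tfrac12$.
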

\begin{proof}
Consider an automaton 
$\cala= \frac{1}{3}\cdot\cala_1 + \frac{1}{3}\cdot(1-\cala_2) + \frac{1}{3}\cdot \cala_3$
(since $\cala_2$ is absorbing we can complement $\cala_2$ and obtain 
an automaton for $1-\cala_2$).
We show that the quantitative equality problem for $\cala$ with $\lambda=\frac{1}{3}$ 
is yes iff the answer to the PCP problem instance is yes.
We prove the following two cases.
\begin{enumerate}
\item Suppose there is a finite word $w$ such that 
$\varphi_1(w)=\varphi_2(w)$. 
Consider the infinite word $w^*=w \$ w'$ where $w'$ is an 
arbitrary infinite word.
Then the acceptance probability of $w^*$ in $\cala_3$ is~0 
(since it contains a $\$$).
Since $w^*$ contains a $\$$ and $\first(w^*)=w$,
by Lemma~\ref{lemm:theta_val} the acceptance probability of $w^*$ in 
$\cala$ is $\frac{1}{3} + \frac{1}{2^{|w|}}\cdot(\theta_1(w)-\theta_2(w))$.
Since $\varphi_1(w)=\varphi_2(w)$ it follows that
$\theta_1(w)-\theta_2(w)=0$.
It follows that the acceptance probability of $w^*$ in $\cala$ is 
$\frac{1}{3}$.

\item Consider an infinite word $w$.
If the word contains no $\$$, then the acceptance probability of $w$ in
$\cala_1$ and $\cala_2$ is~0, and the 
acceptance probability in $\cala_3$ is~1.
Hence $\cala$ accepts $w$ with probability $\frac{2}{3} > \frac{1}{3}$.
If the word $w$ contains a $\$$, then $\cala_3$ accepts 
$w$ with probability~0.
The difference of acceptance probability of $w$ in $\cala_1$ and $\cala_2$ 
is 
$\frac{1}{2^{|w'|}}\cdot(\theta_1(w')-\theta_2(w'))$, where $w'=\first(w)$.
Hence the acceptance probability of $w$ in $\cala$ is $\frac{1}{3}$ iff 
$\theta_1(w')=\theta_2(w')$.
Hence $w'$ is a witness that $\varphi_1(w')=\varphi_2(w')$.
\end{enumerate}
It follows that there exists a finite word $w$ that is a witness
to the PCP instance iff there exists an infinite word $w^*$ in 
$\cala$ with acceptance probability equal to $\frac{1}{3}$.

For acceptance-absorbing reachability automata: consider the same 
construction as above with $\cala_3$ being replaced by $\cala_4$, 
and the equality question for $\lambda=\frac{2}{3}$.
Since $\cala_1-\cala_2 < 1$, it follows that any witness word must 
contain a $\$$, and then the proof is similar as above. The result for acceptance-absorbing 
reachability automata also follows easily from \cite{Bertoni}.\qed
\end{proof}

\smallskip\noindent{\bf Quantitative existence and value problems.}
We will now show that the quantitative existence and the quantitative value 
problems are undecidable for probabilistic absorbing safety automata.
We start with a technical lemma.

\begin{lemma}\label{lemm:tech}
Let us consider a PCP instance.
Let $z=\max_{\sigma \in \alphab} \set{|\varphi_1(\sigma)|, |\varphi_2(\sigma)|}$.
For a finite word $w$, if 
$\theta_1(w) - \theta_2(w) \neq 0$, then 
$(\theta_1(w) - \theta_2(w) )^2 \geq \frac{1}{k^{2\cdot |w|\cdot z }}$.
\end{lemma}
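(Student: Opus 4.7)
The plan is to exploit the fact that each $\theta_i(w)$ is a rational with a very controlled denominator, so if the difference is nonzero it cannot be too small (a Liouville-type argument).

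First I would observe that from the definition of $\psi$, if $\varphi_i(w) = b_1 b_2 \ldots b_{m_i}$ with $m_i = |\varphi_i(w)|$, then
\[
\theta_i(w) \cdot k^{m_i} = \sum_{j=1}^{m_i} b_j \cdot k^{j-1} \in \mathbb{N}.
\]
Hence $\theta_i(w) = N_i / k^{m_i}$ for some non-negative integer $N_i$. Moreover, since $|\varphi_i(\sigma)| \leq z$ for every $\sigma \in \alphab$, a trivial induction on the length of $w$ gives $m_i \leq z \cdot |w|$.

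Next I would bring the two fractions to a common denominator. Without loss of generality assume $m_1 \leq m_2$; then
\[
\theta_1(w) - \theta_2(w) = \frac{N_1 \cdot k^{m_2 - m_1} - N_2}{k^{m_2}}.
\]
The numerator is an integer. If $\theta_1(w) - \theta_2(w) \neq 0$, then this integer is nonzero and hence has absolute value at least $1$, so
\[
|\theta_1(w) - \theta_2(w)| \geq \frac{1}{k^{m_2}} \geq \frac{1}{k^{z \cdot |w|}}.
\]
Squaring both sides yields the claimed bound $(\theta_1(w) - \theta_2(w))^2 \geq k^{-2 |w| z}$.

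There is no real obstacle here; the only subtlety is noticing the right common denominator (taking $k^{\max(m_1,m_2)}$ rather than the naive $k^{m_1+m_2}$), which is exactly what saves a factor of two in the exponent and makes the squared bound match the statement. The remaining estimates on $|\varphi_i(w)|$ and on the magnitude of a nonzero integer are immediate.
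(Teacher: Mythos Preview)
Your proof is correct and follows essentially the same approach as the paper: both argue that each $\theta_i(w)$ is a rational with denominator dividing $k^{|\varphi_i(w)|} \leq k^{z|w|}$, so the nonzero difference has absolute value at least $k^{-z|w|}$, and squaring gives the bound. Your version is in fact more explicit about the common-denominator step (using $k^{\max(m_1,m_2)}$), whereas the paper just asserts that both $\theta_i(w)$ can be written as $p_i/q$ with $q \leq k^{z|w|}$.
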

\begin{proof}
Given the word $w$, we have 
$|\varphi_i(w)| \leq |w|\cdot z$, for $i \in \set{1,2}$.
It follows that $\theta_i(w)$ can be expressed as a rational number
$\frac{p_i}{q}$, where $q \leq k^{z\cdot|w|}$. 
It follows that if $\theta_1(w) \neq \theta_2(w)$, then 
$|\theta_1(w) -\theta_2(w)| \geq \frac{1}{k^{z \cdot |w|}}$.
The result follows.
\qed
\end{proof}

\begin{figure}[!tb]
  \begin{center}
    \unitlength=4pt
    \begin{picture}(25, 15)(-10,-5)
    \gasset{Nw=5,Nh=5,Nmr=2.5,curvedepth=0}
    \thinlines
    \node[Nmarks=i,iangle=270,Nmarks=r,rdist=1](A1)(0,0){$q_0$}
    \drawloop[loopangle=90](A1){$\alphab, \frac{1}{k^{2\cdot(z+1)}}$}
    \node[Nmarks=r,rdist=1](A2)(25,0){$q_1$}
    \drawloop[loopangle=0](A2){$\wh{\alphab}$}
    \drawedge(A1,A2){$\$$}
    \node(A3)(-25,0){$q_2$}
    \drawloop[loopangle=180](A3){$\wh{\alphab}$}
    \drawedge(A1,A3){$\alphab,1- \frac{1}{k^{2\cdot(z+1)}}$}
    \end{picture}
  \end{center}
  \caption{Safety automaton $\cala_5$.}\label{fig:aut5}
\end{figure}
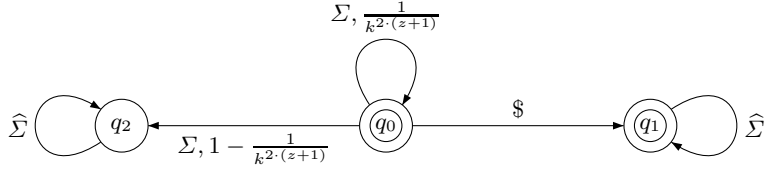

\smallskip\noindent{\bf Automaton $\cala_5$.}
Consider the automaton shown in Fig~\ref{fig:aut5}: 
the accepting states are $q_0$ and $q_1$. 
For any input letter in $\alphab$, from the initial state $q_0$ 
the next state is itself with probability $\frac{1}{k^{2\cdot(z+1)}}$,
and the rest of the probability is to goto $q_2$.
For input letter $\$$ the next state is $q_1$ with probability~1.
The states $q_1$ and $q_2$ are absorbing.
Given a word $w$, if there is no $\$$ in $w$, then the acceptance probability is~0;
otherwise the acceptance probability is 
$\frac{1}{k^{2(z+1)\cdot|w'|}}$, where $w'=\first(w)$.
Also note that the automaton $\cala_5$ is absorbing.

\begin{theorem}[Quantitative existence and value problems]\label{thrm2}
The quantitative existence and value problems are undecidable for 
probabilistic absorbing safety automata.
\end{theorem}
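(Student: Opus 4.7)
The plan is to strengthen the construction of Theorem~\ref{thrm1} so that the existence of a PCP solution corresponds to \emph{strict} superiority over a threshold rather than exact equality. The trick is to replace the signed difference $\theta_1(w')-\theta_2(w')$ (whose sign we do not control) by its square via a synchronized-product automaton, and then to use $\cala_5$ as a carefully tuned ``bonus'' that is dominated by this quadratic penalty whenever the PCP matching fails.

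Let $\cala^*=\tfrac{1}{2}\cala_1+\tfrac{1}{2}(1-\cala_2)$; since $\cala_1$ and $\cala_2$ are absorbing safety automata and the class is closed under complementation and convex combination, $\cala^*$ is absorbing safety. Lemma~\ref{lemm:theta_val} gives
\[
\cala^*(w) \;=\; \tfrac{1}{2}+\frac{\theta_1(w')-\theta_2(w')}{2^{|w'|+1}}
\]
for words $w$ containing a $\$$ with $w'=\first(w)$, and $\cala^*(w)=\tfrac{1}{2}$ otherwise. Now build $\cala^{**}$ as the synchronized product of two independent copies of $\cala^*$: state space $Q^*\times Q^*$, transitions $M^{**}_\sigma((p,q),(p',q'))=M^*_\sigma(p,p')\cdot M^*_\sigma(q,q')$, and accepting set consisting of all non-absorbing pairs together with the absorbing pairs whose two coordinates \emph{disagree} on acceptance in $\cala^*$. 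A direct computation shows that $\cala^{**}$ is absorbing (with absorbing block $C^*\times C^*$) and that its safety value equals the probability of reaching a disagreement pair, yielding $\cala^{**}(w)=2\cala^*(w)(1-\cala^*(w))$. Substituting the formula above gives
\[
\cala^{**}(w) \;=\; \tfrac{1}{2}-\frac{(\theta_1(w')-\theta_2(w'))^2}{2\cdot 4^{|w'|}}
\]
for words with $\$$, and $\cala^{**}(w)=\tfrac{1}{2}$ otherwise.

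Finally, form $\cala=\alpha\cala^{**}+\beta\cala_5$ with $\alpha+\beta=1$ and rationals chosen so that $\tfrac{2\beta}{\alpha}<\tfrac{k^2}{4}$ (possible for every $k\geq 2$; e.g.\ $\alpha=\beta=\tfrac{1}{2}$ when $k\geq 3$, and $\alpha=\tfrac{3}{4},\beta=\tfrac{1}{4}$ when $k=2$). Then $\cala(w)=\alpha/2$ for every word without $\$$; $\cala(w)=\alpha/2+\beta/k^{2(z+1)|w'|}>\alpha/2$ whenever $w$ contains a $\$$ and $\theta_1(w')=\theta_2(w')$; and whenever $w$ contains a $\$$ but $\theta_1(w')\neq\theta_2(w')$, Lemma~\ref{lemm:tech} combined with the choice of $\alpha,\beta$ yields
\[
\cala(w)-\tfrac{\alpha}{2} \;\leq\; -\frac{\alpha}{2(4k^{2z})^{|w'|}}+\frac{\beta}{k^{2(z+1)|w'|}} \;<\;0.
\]
Hence some word satisfies $\cala(w)>\alpha/2$ if and only if the PCP instance has a solution, so both the quantitative existence problem and the quantitative value problem for the absorbing safety automaton $\cala$ at threshold $\lambda=\alpha/2$ decide PCP and are therefore undecidable.

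The main obstacle is the product step: verifying that the synchronized product of $\cala^*$ with itself is an absorbing safety automaton whose safety value equals $2\cala^*(1-\cala^*)$. ``Two independent copies'' must be encoded structurally (through the product transition matrix), and the accepting set has to be chosen so that the safety condition on the product captures precisely the disagreement event on the ultimately-absorbed outcomes. The remaining steps reduce to routine algebraic bookkeeping using the squared lower bound of Lemma~\ref{lemm:tech}.
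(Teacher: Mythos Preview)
Your approach is essentially the paper's: there one forms $\calb_1=\tfrac12\cala_1+\tfrac12(1-\cala_2)$ (your $\cala^*$) and $\calb_2=\tfrac12\cala_2+\tfrac12(1-\cala_1)=1-\cala^*$, takes the ordinary intersection product $\calb_3$ (accepting set $F_1\times F_2$, so $\calb_3(w)=\calb_1(w)\,\calb_2(w)$), and sets $\calb=\tfrac12\calb_3+\tfrac12\cala_5$ with threshold $\tfrac18$; since your disagreement product satisfies $\cala^{**}=2\cala^*(1-\cala^*)=2\calb_1\calb_2=2\calb_3$, the two constructions coincide up to a harmless factor. The paper's intersection product is the more standard safety-automaton operation and sidesteps the verification you flag as the main obstacle, while your free weights $\alpha,\beta$ make the bonus-versus-penalty inequality more transparent, in particular for $k=2$.
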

\begin{proof}
Let us first consider the following automata: 
$\calb_1= \frac{1}{2}\cdot \cala_1 + \frac{1}{2}\cdot (1-\cala_2)$ 
and 
$\calb_2= \frac{1}{2}\cdot \cala_2 + \frac{1}{2}\cdot (1-\cala_1)$. 
The Cartesian product of $\calb_1$ and $\calb_2$ 
gives us the automaton $\calb_3= \frac{1}{4} - (\cala_1 -\cala_2)^2$.
Finally, let us consider the automaton
$\calb=\frac{1}{2}\cdot \calb_3 + \frac{1}{2} \cdot \cala_5$.
The automaton $\calb$ can be obtained as an absorbing safety automaton.
We show that there exists a word (or the sup value of words) is 
greater than $\frac{1}{8}$ iff the answer to PCP is yes.

\begin{enumerate}
\item If the answer to PCP problem is yes, consider the 
witness finite word $w$ such that $\varphi_1(w)=\varphi_2(w)$.
We construct an infinite word $w^*$ for $\calb$ as follows: 
$w^*=w \$ w'$ where $w'$ is an arbitrary infinite word.
Since the word $w^*$ contains a $\$$ by Lemma~\ref{lemm:theta_val} we have 
the difference in acceptance probability of $w^*$ in 
$\cala_1$ and $\cala_2$ is $\frac{1}{2^{|w|}}\cdot (\theta_1(w)-\theta_2(w))$.
Since $\varphi_1(w)=\varphi_2(w)$, it follows that 
$\cala_1(w^*)=\cala_2(w^*)$.
Hence we have 
$\calb(w^*)= \frac{1}{8} + \frac{1}{{k^{2 \cdot(z+1)\cdot|w|}}} > \frac{1}{8}$.

\item We now show that if there is an infinite word with 
acceptance probability greater than $\frac{1}{8}$ (or the sup over the infinite 
words of the acceptance probability is greater than $\frac{1}{8}$)
in $\calb$, then the answer to the PCP problem is yes.  
Consider an infinite word $w$ for $\calb$.
If $w$ contains no $\$$, then $\cala_1$, $\cala_2$, and 
$\cala_5$ accepts $w$ with probability~0, and hence 
the acceptance probability in $\calb$ is $\frac{1}{8}$.
Consider an infinite word $w$ that contains a $\$$.
Let $w'=\first(w)$.
If $\theta_1(w') \neq \theta_2(w')$, then by Lemma~\ref{lemm:theta_val} 
and Lemma~\ref{lemm:tech} we have 
$(\cala_1(w)-\cala_2(w))^2 \geq 
\frac{1}{2^{2|w'|}} \cdot 
\frac{1}{k^{2\cdot |w'|\cdot z }} 
\geq \frac{1}{k^{2\cdot |w'|\cdot (z+1) }}$.
Since $\cala_5(w)=  \frac{1}{k^{2\cdot |w'|\cdot (z+1) }}$,
it follows that $\calb(w) \leq 1/8$. 
If $\theta_1(w')=\theta_2(w')$ (which implies $\varphi_1(w')=\varphi_2(w')$), 
then 
$\calb(w) =\frac{1}{8} +  \frac{1}{k^{2\cdot |w'|\cdot (z+1) }} > \frac{1}{8}$.
\end{enumerate}
It follows from above that the quantitative existence and 
the quantitative value problems are undecidable for 
absorbing safety (and hence also for acceptance-absorbing 
reachability) automata.
\qed
\end{proof}

\begin{corollary}
Given any rational $0<\lambda<1$, the quantitative equality, 
existence, and 
value problems are undecidable for probabilistic 
absorbing safety and acceptance-absorbing reachability automata.  
\end{corollary}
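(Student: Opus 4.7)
The plan is to extend Theorems~\ref{thrm1} and~\ref{thrm2}, which already establish undecidability at the specific thresholds $\lambda = 1/3$ (equality) and $\lambda = 1/8$ (existence and value), to arbitrary rational $\lambda \in (0,1)$ by a simple affine rescaling via the random-choice combination of Section~3.3. Recall that combination gives, from absorbing safety automata $\cala_1, \cala_2$ and $\beta_1 + \beta_2 = 1$, an absorbing safety automaton computing $\beta_1 \cala_1 + \beta_2 \cala_2$; the paper also notes that any constant rational value can be realized by an absorbing safety automaton, so we have on hand automata $\calc_0$ and $\calc_1$ of constant values $0$ and $1$.

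For the equality problem, let $\cala$ be the automaton from Theorem~\ref{thrm1}, for which $\cala(w) = 1/3$ iff the PCP instance is solvable. Given a target $\lambda \in (0,1)$: if $\lambda = 1/3$ take $\calb_\lambda = \cala$; if $\lambda > 1/3$ set $\alpha = 3(1-\lambda)/2 \in (0,1)$ and $\calb_\lambda = \alpha \cdot \cala + (1-\alpha)\cdot \calc_1$, so that $\calb_\lambda(w) = \alpha\cala(w) + (1-\alpha) = \lambda$ precisely when $\cala(w) = 1/3$; if $\lambda < 1/3$ set $\alpha = 3\lambda$ and $\calb_\lambda = \alpha \cdot \cala + (1-\alpha) \cdot \calc_0$, so that $\calb_\lambda(w) = \alpha\cala(w) = \lambda$ precisely when $\cala(w) = 1/3$. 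This yields undecidability of the quantitative equality problem at $\lambda$ for absorbing safety automata, and, by the remark following the complementation discussion, also for acceptance-absorbing reachability automata (and hence B\"uchi, coB\"uchi, and limit-average).

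For the existence and value problems, apply the same convex-combination rescaling to the automaton $\calb$ of Theorem~\ref{thrm2}, exploiting that $\calb(w) > 1/8$ iff the PCP instance is solvable. The affine map $x \mapsto \alpha x + (1-\alpha)\mu$ with $\alpha > 0$ and $\mu \in \{0,1\}$ is strictly monotone, so $\calb_\lambda(w) > \lambda$ iff $\calb(w) > 1/8$, and likewise $\sup_w \calb_\lambda(w) > \lambda$ iff $\sup_w \calb(w) > 1/8$. There is essentially no obstacle: the only thing to verify is that $\alpha$ can be chosen in $(0,1)$ for every rational $\lambda \in (0,1)$, which is immediate from $\lambda_0 \in (0,1)$, and that the random-choice construction preserves the absorbing safety (resp.\ acceptance-absorbing reachability) structure, which was already recorded in Section~3.3.
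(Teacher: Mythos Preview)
Your proof is correct and follows essentially the same approach as the paper: both argue by convex combination with constant-value automata to rescale the threshold, splitting into the cases $\lambda < c$ (combine with the constant~$0$ automaton using weight $\lambda/c$) and $\lambda > c$ (combine with the constant~$1$ automaton using weight $(1-\lambda)/(1-c)$). Your formulas $\alpha = 3\lambda$ and $\alpha = 3(1-\lambda)/2$ are exactly these expressions specialized to $c = 1/3$, and the monotonicity observation for the existence and value problems is the only point you make slightly more explicit than the paper does.
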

\begin{proof}
We have shown in Theorem~\ref{thrm1} and Theorem~\ref{thrm2} that the problems 
are undecidable for specific constants (e.g., $\frac{1}{3}, \frac{1}{8}$ etc.).
We show below given the problem is undecidable for a constant $0< c < 1$,
the problem is also undecidable for any given rational $0< \lambda < 1$.
Given an automaton $\calb$ we consider two cases:
\begin{enumerate}
\item If $\lambda \leq c$, then consider the 
automaton $\cala= \frac{\lambda}{c} \cdot \calb + (1-\frac{\lambda}{c})\cdot 0$.
The quantitative decision problems for constant $c$ in $\calb$ is exactly same 
as the quantitative decision problems for $\cala$ with $\lambda$.

\item If $\lambda \geq c$, then consider the 
automaton $\cala=\frac{1-\lambda}{1-c}\cdot \calb + \frac{\lambda-c}{1-c}\cdot 1$.
The quantitative decision problems for $c$ in $\calb$ is exactly same 
as the quantitative decision problems for $\cala$ with $\lambda$.
\end{enumerate}
The desired result follows. The result for acceptance-absorbing reachability 
automata also follows from \cite{Bertoni}.
\qed
\end{proof}

\begin{corollary}[Quantitative decision problems]
Given any rational $0<\lambda<1$, the quantitative equality, 
existence and 
value problems are undecidable for probabilistic safety, 
reachability, B\"uchi, coB\"uchi and limit-average automata.
\end{corollary}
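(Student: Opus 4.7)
The plan is to derive this from the preceding corollary (undecidability for absorbing safety and acceptance-absorbing reachability automata) using two simple observations: class inclusion and agreement of acceptance probabilities under the five conditions for acceptance-absorbing automata.

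First, an absorbing safety automaton is in particular a safety automaton, so the undecidability of the quantitative equality, existence, and value problems at any fixed rational $0<\lambda<1$ for absorbing safety automata immediately yields undecidability for arbitrary probabilistic safety automata. Likewise, an acceptance-absorbing reachability automaton is a reachability automaton, so undecidability at any $\lambda$ transfers to probabilistic reachability automata.

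Second, for the B\"uchi, coB\"uchi and limit-average cases, I would invoke the identity $\Reach(F)=\Buchi(F)=\coBuchi(F)=\LimitAvg(F)$ noted earlier for acceptance-absorbing automata: once a path enters the absorbing set $F$ it stays there forever (contributing to $\Buchi(F)$, $\coBuchi(F)$, and giving $\LimitAvg=1$), and if it never enters $F$ then $\Inf(\pi)\cap F=\emptyset$, so $\coBuchi(F)$ and $\Buchi(F)$ both fail and $\LimitAvg=0$. Hence for every word $w$ and every acceptance-absorbing automaton $\cala$ we have
\[
\cala(\Reach(F),w)=\cala(\Buchi(F),w)=\cala(\coBuchi(F),w)=\cala(\LimitAvg(F),w),
\]
so the quantitative equality, existence, and value problems for $\cala$ with a given $\lambda$ coincide across these four conditions. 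The undecidability of the three quantitative problems for acceptance-absorbing reachability automata at any rational $0<\lambda<1$ therefore transfers verbatim to probabilistic B\"uchi, coB\"uchi, and limit-average automata.

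There is no real obstacle here: all the work is in the preceding constructions (the PCP reduction giving undecidability at specific constants, the shifting argument rescaling any $0<c<1$ to any target $\lambda$, and the equivalences between conditions for acceptance-absorbing automata). The proof is essentially a one-line appeal to these two facts, which is what I would write.
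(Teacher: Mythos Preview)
Your proposal is correct and matches the paper's intended reasoning: the corollary is stated without an explicit proof, but the paper has already set up precisely the two ingredients you use --- the class inclusions, and the identity $\Reach(F)=\Buchi(F)=\coBuchi(F)=\LimitAvg(F)$ for acceptance-absorbing automata (spelled out in the paragraph on acceptance-absorbing automata and in the remark following the complementation discussion). There is nothing to add.
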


\section{(Un-)Decidability of Qualitative Decision Problems}
In this section we show that the qualitative decision problems 
are decidable for safety automata, and the limit problem is 
undecidable for reachability, B\"uchi, coB\"uchi and limit-average 
automata.
The result is inspired from the proof of~\cite{GO-Tech} that shows 
that the limit problem is undecidable for finite automata.
If the finite automata construction of~\cite{GO-Tech} were acceptance 
absorbing, the result would have followed for acceptance-absorbing 
reachability automata (and hence also for B\"uchi, coB\"uchi and 
limit-average automata).
However, the undecidability proof of~\cite{GO-Tech} for finite automata 
constructs automata that are not acceptance-absorbing. 
Simply changing the accepting states of the automata constructed 
in~\cite{GO-Tech} to absorbing states does not yield the undecidability for 
accepting absorbing automata.
We show that the construction of~\cite{GO-Tech} can be adapted to 
prove undecidability of the limit problem for acceptance-absorbing 
automata with reachability condition.
We first present a simple proof that for safety condition the almost and limit
problem coincide.

\begin{lemma}
Given an automaton with a safety condition the answer to the limit problem
is yes iff the answer to the almost problem is yes.
\end{lemma}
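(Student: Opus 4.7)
The plan is as follows. One direction is immediate: if some word $w^*$ realises $\Prb^{w^*}_\cala(\Safe(F)) = 1$, then $w^*$ itself witnesses the limit property for every $\vare > 0$, so I focus on the converse. My strategy is to pass to the deterministic ``reachable-subset'' automaton $\cala^{\mathrm{det}}$ whose state space is $2^Q$, whose initial state is $\set{q_\ii}$, and whose transition on letter $\sigma$ sends $S$ to $\set{q' \in Q \mid \exists q \in S,\ M_\sigma(q,q') > 0}$. For a word $w$ write $R_n(w)$ for the state of $\cala^{\mathrm{det}}$ after reading its first $n$ letters; this is exactly the set of states reachable from $q_\ii$ with positive probability. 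The equivalence
\[ \Prb^w_\cala(\Safe(F)) = 1 \iff R_n(w) \subseteq F \text{ for all } n \geq 0 \]
is then immediate. Call a subset \emph{safe} if it is contained in $F$.

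Now I would assume, for contradiction, that the almost answer is no. Then the tree of finite subset-runs from $\set{q_\ii}$ that stay safe at every step has no infinite branch; since it is finitely branching (over $|\alphab|$), K\"onig's lemma implies this tree is finite, giving a uniform bound $N$ such that for every word $w$ there is some index $k \leq N$ with $R_k(w) \not\subseteq F$. Let $p_{\min}$ be the smallest positive entry appearing in any $M_\sigma$. If $q \in R_k(w) \setminus F$, then there is a path $q_\ii = q_0, q_1, \ldots, q_k = q$ in $\cala$ on input $w_0 \cdots w_{k-1}$ all of whose transitions have positive probability, and so probability at least $p_{\min}^k \geq p_{\min}^N$ has already exited $F$ by step $k$. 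Hence $\Prb^w_\cala(\Safe(F)) \leq 1 - p_{\min}^N$ uniformly in $w$, contradicting the assumption that the limit answer is yes.

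The main obstacle is making the K\"onig's-lemma step precise, together with correctly translating ``$q$ reachable with positive probability after $k$ steps'' into a uniform lower bound on the actual probability mass outside $F$ (the point being that a single positive-probability path is enough, since we only need \emph{some} lower bound). A trivial preliminary case to dispatch at the outset is $q_\ii \notin F$, where every word has safety probability zero and both answers are vacuously no.
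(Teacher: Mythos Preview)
Your argument is correct. Both your proof and the paper's rest on the same two observations: (i) $\Prb^w_\cala(\Safe(F))=1$ iff every reachable subset $R_n(w)$ stays inside $F$, and (ii) any single positive-probability path of length $k$ that exits $F$ carries mass at least $p_{\min}^k$. The difference lies in how the combinatorial step is packaged. The paper argues directly: it fixes $\vare=\eta^{2^{|Q|}}$, takes a word $w$ with $\cala(w)>1-\vare$, concludes from~(ii) that $R_i(w)\subseteq F$ for all $i\le 2^{|Q|}$, applies pigeonhole to find $R_n=R_m$ with $n<m\le 2^{|Q|}$, and then explicitly exhibits the ultimately periodic witness $w_1\cdots w_{n-1}(w_n\cdots w_{m-1})^\omega$. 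Your contrapositive via K\"onig's lemma is equally valid but non-constructive: it yields only the existence of a uniform depth bound $N$ and hence a uniform gap $1-p_{\min}^N$, without identifying a concrete almost-sure word or an explicit value for $N$. The paper's route therefore gives slightly more (an explicit $2^{|Q|}$ bound and a periodic witness), while yours is a clean compactness argument that some readers may find more transparent.
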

\begin{proof}
If the answer to the almost problem is yes, then trivially the answer 
to the limit problem is yes (since a witness for almost problem is a 
witness for all $\vare>0$ for the limit problem).
We now show the converse is true.
Consider an automaton $\cala$ with $\ell$ states, and let $\eta>0$ be the 
minimum non-zero transition probability in $\cala$.
We assume that the answer to the limit problem is yes and show that the 
answer to the almost problem is also yes.
Consider $\vare=\eta^{2^\ell}$, and let $w$ be a word such that 
$\cala(w) > 1-\vare$.
For a position $i$ of $w$ let $S_i=\set{q \in Q \mid \delta_{|w_i|+1}(q) >0}$ 
be the set of states that have positive probability in the distribution of 
states after reading the prefix of length $i$ of $w$.
If a path in $\cala$ leaves the set $F$ of accepting states within $k$ steps, 
then it leaves with probability at least $\eta^k$. 
Since $w$ ensures the safety condition with probability at least 
$1-\vare =1 - \eta^{2^\ell}$ it follows that for all $1\leq i \leq 2^\ell$ we have 
$S_i \subseteq F$ (i.e., each $S_i$ upto $2^\ell$ is a subset of the accepting 
states).
There must exist  $0\leq n < m \leq 2^\ell$ such that $S_n=S_m$.
Consider the word $w^* = w_1 w_2 \cdots w_{n-1} (w_n w_{n+1} w_{m-1})^\omega$.
The word $w^*$ ensures the following: let 
$S^*_i=\set{q \in Q \mid \delta_{|w^*_i|+1}(q) >0}$, then 
for all $0\leq i \leq n$ we have $S_i=S^*_i$, and for all $i \geq n$ we have 
$S^*_i \subseteq \bigcup_{j=n}^m S_i$.
It follows that for all $i \geq 0$ we have $S^*_i \subseteq F$ and hence $\cala(w^*)=1$. 
The result follows.
\qed
\end{proof}

\smallskip\noindent{\bf Qualitative decision problems for safety conditions.}
The above lemma shows that for safety condition the answers to the  
almost and the limit problem coincide.
The almost and positive problems for safety condition are 
PSPACE-complete~\cite{chadha2009expressiveness}.
It follows that the qualitative decision problems are decidable for safety 
condition.

\begin{figure}[!tb]
  \begin{center}
    \unitlength=4pt
    \begin{picture}(25,45)(20,-25)
    \gasset{Nw=5,Nh=5,Nmr=2.5,curvedepth=4}
    \thinlines
    \node[Nmarks=i,iangle=180](A1)(0,0){$q_1$}
    \drawloop[loopangle=90](A1){$a,x$}
    \node(A2)(20,0){$q_2$}
    \drawloop[loopangle=90](A2){$a$}
    \node[Nmarks=r,rdist=1](A3)(0,-16){$q_3$}
    \node(A4)(20,-16){$q_4$}
    \drawloop[loopangle=270](A3){$a,b,\$$}
    \drawloop[loopangle=270](A4){$a,b,\$$}
    \drawedge(A1,A2){$a,1-x$}
    \drawedge[ELside=r](A2,A1){$b$}
    \gasset{curvedepth=0}
    \drawedge(A1,A4){$\$$}
    \drawedge(A2,A4){$\$$}
    \drawedge(A1,A3){$b$}

    \gasset{Nw=5,Nh=5,Nmr=2.5,curvedepth=4}
    \thinlines
    \node[Nmarks=i,iangle=180](A5)(40,0){$q_5$}
    \drawloop[loopangle=90](A5){$a,1-x$}
    \node(A6)(60,0){$q_6$}
    \drawloop[loopangle=90](A6){$a$}
    \node(A7)(40,-16){$q_7$}
    \node[Nmarks=r,rdist=1](A8)(60,-16){$q_8$}
    \drawloop[loopangle=270](A7){$a,b,\$$}
    \drawloop[loopangle=270](A8){$a,b,\$$}
    \drawedge(A5,A6){$a,x$}
    \drawedge[ELside=r](A6,A5){$b$}
    \gasset{curvedepth=0}
    \drawedge(A5,A8){$\$$}
    \drawedge(A6,A8){$\$$}
    \drawedge(A5,A7){$b$}

    \end{picture}
  \end{center}
  \caption{Automata $\cala_1$ (on left) and $\cala_2$ (on right).}\label{fig:limit}
\end{figure}

\begin{lemma}\label{lemm:limit_lemma}
Consider the acceptance-absorbing automaton $\cala=
\frac{1}{2}\cala_1 + \frac{1}{2}\cdot \cala_2$, where $\cala_1$ and 
$\cala_2$ are shown in Fig~\ref{fig:limit}.
The following assertion hold:
for all $\vare>0$, there exists a word $w$ such that $\cala(w) > 1-\vare$ 
iff $x >\frac{1}{2}$.
\end{lemma}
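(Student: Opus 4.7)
The plan is to reduce $\cala(w)$ to a closed-form expression in the ``signature'' of $w$. First observe that if $w \in \{a,b\}^\omega$ contains no occurrence of $\$$, then $q_8$ is unreachable in $\cala_2$, so $\cala_2(w)=0$ and $\cala(w)\le 1/2$; hence any witness for the limit problem must contain at least one $\$$. Since the first $\$$ sends every non-absorbing state of both automata into an absorbing one, only the prefix $w' \in \{a,b\}^*$ up to that first $\$$ matters, and one may write $w' = a^{m_0} b\, a^{m_1} b \cdots b\, a^{m_{k-1}} b\, a^{m_k}$ with $k$ the number of $b$'s.

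A direct induction on $k$, tracking the surviving mass at $q_1$ (resp.\ $q_5$) after each $b$, gives
\[
\cala_1(w) \;=\; 1 - \prod_{i=0}^{k-1}(1 - x^{m_i}), \qquad
\cala_2(w) \;=\; \prod_{i=0}^{k-1}\bigl(1 - (1-x)^{m_i}\bigr).
\]
The induction step uses: after reading $a^{m_i}$ from $q_1$ in $\cala_1$, the mass remaining at $q_1$ equals $x^{m_i}$, which the next $b$ routes into the accepting sink $q_3$; symmetrically, after $a^{m_i}$ from $q_5$ in $\cala_2$ the mass at $q_5$ is $(1-x)^{m_i}$, which the next $b$ sends to the rejecting sink $q_7$. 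Letting $P,Q$ denote the two products, $\cala(w)=\tfrac12\bigl((1-P)+Q\bigr)=\tfrac12\bigl(1-(P-Q)\bigr)$.

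For the ``only if'' direction, assume $x\le 1/2$. Then $x^{m_i}\le (1-x)^{m_i}$ termwise, so $1-x^{m_i}\ge 1-(1-x)^{m_i}$ and $P\ge Q$; thus $\cala(w)\le \tfrac12$ uniformly in $w$, and the limit answer is no.

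For the ``if'' direction, assume $x>1/2$ and fix $\vare>0$. Taking all $m_i=m$ gives $P=(1-x^m)^k$ and $Q=(1-(1-x)^m)^k$. Since $(1-x)/x<1$, for $m$ large enough the interval $\bigl(\log(2/\vare)/x^m,\ \vare/(1-x)^m\bigr)$ is non-empty; any integer $k$ inside it satisfies $(1-x^m)^k<\vare/2$ (via $-\log(1-x^m)\ge x^m$) and $k(1-x)^m<\vare$ (so $Q>1-\vare$ by Bernoulli), yielding $\cala(w)>1-\vare$. The main obstacle is establishing the clean product formulas; once they are in place, the dichotomy at $x=1/2$ falls out immediately from the termwise comparison of $x^{m_i}$ and $(1-x)^{m_i}$.
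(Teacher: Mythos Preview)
Your proof is correct, and its structure---deriving the product formulas for $\cala_1(w)$ and $\cala_2(w)$, then using the termwise comparison $x^{m_i}\le(1-x)^{m_i}$ for the ``only if'' direction---matches the paper exactly. The genuine difference is in the ``if'' direction. The paper, following the Gimbert--Oualhadj argument for finite automata, uses a \emph{varying} sequence $n_k=\log_x(1/k)+J$ so that $\sum_k x^{n_k}=x^J\sum_k 1/k$ diverges while $\sum_k(1-x)^{n_k}=x^{\beta J}\sum_k 1/k^\beta$ converges (where $\beta=\ln(1-x)/\ln x>1$); this yields $\prod(1-x^{n_k})=0$ and $\sum(1-x)^{n_k}\le\vare$, and one then truncates. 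Your argument instead takes a \emph{constant} block length $m_i=m$ and exploits the gap between the two geometric rates $x^m$ and $(1-x)^m$ directly: since $(1-x)/x<1$, the window $\bigl(\log(2/\vare)/x^m,\ \vare/(1-x)^m\bigr)$ eventually has length exceeding~$1$ and contains an integer $k$. This is more elementary---no infinite products, no $p$-series test, no rounding of logarithms to integers---and arguably cleaner for this lemma in isolation. The paper's route has the advantage of reusing the exact machinery of~\cite{GO-Tech}, which is the point of the surrounding section; your route shows that the full strength of that machinery is not needed here. One small remark: you should note (it is immediate) that the upper endpoint of your interval tends to infinity, so the interval eventually has length~$\ge 1$ and an integer $k$ indeed exists.
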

\begin{proof}
Given $\vare>0$, a word $w$ to be accepted with probability at least 
$1-\frac{\vare}{2}$ both $\cala_1$ and $\cala_2$ must accept it with
probability at least $1-\vare$.
Conversely, given a word $w$ if it is accepted by $\cala_1$ and $\cala_2$
with probability at least $1-\vare$, then $\cala$ accepts it with 
probability at least $1-\vare$.
Hence we show that for all $\vare>0$ there exist words $w$ such that 
both $\cala_1(w) \geq 1-\vare$ and $\cala_2(w)\geq 1-\vare$ iff $x > \frac{1}{2}$.

We first observe that a word $w$ to be accepted in $\cala_1$ can have a 
$\$$ only after having reached its absorbing accepting state (since for all 
other states the input letter $\$$ leads to the non-accepting 
absorbing state $q_4$). 
A word $w$ to be accepted in $\cala_2$ must have a $\$$, and the $\$$ must 
occur when the current state is either $q_5$ or $q_6$.
A word to be accepted in $\cala_1$ must contain a $b$ and the $b$ must occur 
when the current state is $q_1$.
Hence it suffices to consider words of the form $w^*=w w'$, where 
$w'$ is an arbitrary infinite word and $w$ is a finite word  of the form 
$a^{n_0}b a^{n_1} b \ldots a^{n_i} b \$$.
Consider a word $a^n b$, for $n \geq 0$: the probability to reach the 
state $q_3$ from $q_1$ is $x^n$, and the probability to reach the absorbing 
non-accepting state $q_7$ from $q_5$ is $(1-x)^n$.
Consider a word 
$w=a^{n_0}b a^{n_1} b \ldots a^{n_i} b \$$:
(a)~the probability of reaching $q_3$ from $q_1$ in $\cala_1$ is 
$1-\prod_{k=0}^i(1-x^{n_i})$;
(b)~the probability of reaching $q_7$ from $q_5$ in $\cala_2$ is  
\[
\begin{array}{rcl}
(1-x)^{n_1} + (1-(1-x)^{n_1})\cdot(1-x)^{n_2} + & \cdots & + \prod_{k=0}^{i-1}(1-(1-x)^{n_k})\cdot (1-x)^{n_i} \\[1ex]
& = & 1-\prod_{k=0}^i(1-(1-x)^{n_i}).
\end{array}
\]
If $x \leq \frac{1}{2}$, then $x \leq 1-x$, and hence 
$1-\prod_{k=0}^i(1-x^{n_i}) \leq 1-\prod_{k=0}^i(1-(1-x)^{n_i})$.
It follows that the acceptance probability of $w$ in $\cala_1$ is 
less than the rejection probability in $\cala_2$, hence 
$\cala(w) \leq \frac{1}{2}$.
It follows that if $x < \frac{1}{2}$, then for all words $w$ we 
have $\cala(w) \leq \frac{1}{2}$.

To complete the proof we show that if $x > \frac{1}{2}$, 
then for all $\vare>0$, there exist words $w$ such that 
both $\cala_1(w) \geq 1-\vare$ and $\cala_2(w)\geq 1-\vare$.
Our witness words $w^*$ will be of the following form: let 
$w=a^{n_0}b a^{n_1} b \ldots a^{n_i} b$ and we will have $w^*=w\$^\omega$.
Our witness construction closely follows the result of~\cite{GO-Tech}.
Given the word $w$, the probability to reach $q_3$ from $q_1$ is 
\[
L_1=1-\prod_{k=0}^i(1-x^{n_i});
\]
hence $\cala_1$ accepts $w^*$ with probability $L_1$.
Given the word $w$, since the last letter is a $b$, with probability~1 
the current state is either $q_7$ or $q_5$.
The probability to reach $q_7$ from $q_5$ for $w$ is given by 
\[
\begin{array}{rcl}
L_2 & = & (1-x)^{n_1} + (1-(1-x)^{n_1})\cdot(1-x)^{n_2} + \cdots + 
\prod_{k=0}^{i-1}(1-(1-x)^{n_k})\cdot (1-x)^{n_i} \\[1ex]
& \leq & \sum_{k=0}^i(1-x)^{n_i} \qquad \text{ since } 
\prod_{k=0}^{j}(1-(1-x)^{n_k}) \leq 1 \text{ for all } j \leq i.
\end{array}
\]
Thus the acceptance probability for $w$ in $\cala_2$ is at least $1-L_2$.
Hence given $\vare>0$, our goal is to construct a 
sequence $(n_0,n_1, n_2, \ldots,n_i)$ such that $L_1 \geq 1-\vare$ 
and $L_2 \leq \vare$.
For $\vare>0$, it suffices to construct an infinite sequence 
$(n_k)_{k \in \nats}$ such that 
\[
\prod_{k\geq 0}(1-x)^{n_k} =0; \quad \text{and} \quad
\sum_{k\geq 0}(1-x)^{n_k} \leq \vare.
\]
Then we can construct a finite sequence $(n_0,n_1,\ldots,n_i)$ of numbers 
such that we have both 
\[
\prod_{k =0}^i(1-x)^{n_k} \leq \vare; \quad \text{and} \quad
\sum_{k =0}^i(1-x)^{n_k} \leq 2\cdot\vare,
\]
as desired.
Since $\prod_{k\geq 0}(1-x)^{n_k} =0$ iff 
$\sum_{k=0}^\infty x^{n_k}=\infty$,
we construct sequence $(n_k)_{k \in \nats}$ such that 
$\sum_{k=0}^\infty x^{n_k}=\infty$,
and $\sum_{k\geq 0}(1-x)^{n_k} \leq \vare$.
Let $(n_k)_{k \in \nats}$ be a sequence such that $n_k=\ln_x(\frac{1}{k}) +J$, 
where $J$ is a suitable constant (to be chosen later and depends on 
$\vare$ and $x$).
Then we have
\[
\sum_{k \geq 0} x^{n_k} = x^J \cdot \sum_{k\geq 0}\frac{1}{k}=\infty.
\]
On the other hand, we have 
\[
1-x =x^{\ln_x(1-x)} =x^{\frac{\ln(1-x)}{\ln x}}
\]
Since $x>\frac{1}{2}$ we have $\beta=\frac{\ln(1-x)}{\ln x} > 1$, i.e.,
there exists $\beta>1$ such that $1-x=x^\beta$.
Hence 
\[
\sum_{k \geq 0} (1-x)^{n_k} =
\sum_{k \geq 0} x^{\beta\cdot n_k} 
=x^{\beta\cdot J} \cdot \sum_{k \geq 0} x^{\beta \cdot \ln_x(\frac{1}{k})}
=x^{\beta \cdot J} \cdot \sum_{k\geq 0} \frac{1}{k^\beta}.
\]
Since the above series converges, for all $\vare>0$, there exists a $J$ 
such that $\sum_{k \geq 0} (1-x)^{n_k} \leq \vare$.
This completes the proof.
\qed
\end{proof}

\smallskip\noindent{\bf Almost and limit problem do not coincide.} 
In contrast to probabilistic safety automata where
the almost and limit question coincide, the answers are different 
for probabilistic acceptance-absorbing reachability automata. 
In the automaton $\cala$ above if $x>\frac{1}{2}$, the answer to 
the limit question is yes. 
It follows from the proof above that if $\frac{1}{2} < x <1$, 
then though the answer to the limit question is yes, the 
answer to the almost question is no.
The almost and positive problem for reachability 
automata is decidable by reduction to POMDPs with reachability 
objective.
We show that the limit question is undecidable.
The proof uses the above lemma and the reduction technique of~\cite{GO-Tech}.

\smallskip\noindent{\bf Reduction for undecidability.}
Given an acceptance-absorbing reachability automaton $\calb$, we construct 
an automaton 
as shown in Fig~\ref{fig:limitundec}: we assume that $\sharp$ does not belong 
to the alphabet of $\calb$, and in the picture the dashed transitions 
from $\calb$ on $\sharp$ are from accepting states, and the solid transitions
are from non-accepting states.
We call the automaton as $\cala^*$.
We claim the following: the answer to the limit problem for the automaton 
$\cala^*$ is yes iff there exists a word $w$ that is accepted in $\calb$ with 
probability greater than $\frac{1}{2}$.
\begin{enumerate}

\item Suppose there is a word $w^*$ such that $\calb(w^*)=y>\frac{1}{2}$.
Let $\eta=y-\frac{1}{2}$. There is a finite prefix $w$ of $w^*$ such 
that the probability to reach an accepting state in $\calb$ given $w$ is 
$x=y-\frac{\eta}{2} > \frac{1}{2}$.
Given $\vare>0$ and $x$ as defined $y-\frac{\eta}{2}$, we have 
$x > \frac{1}{2}$, 
and hence by Lemma~\ref{lemm:limit_lemma} there exists a sequence  
$(n_0,n_1,\ldots,n_k)$ such that 
$(a^{n_0} b a^{n_1} b \ldots a^{n_k} b) (\$)^\omega$ is accepted in $\cala$ 
with probability at least $1-\vare$.
It follows that the automaton $\cala^*$ accepts 
$\big((a w\sharp)^{n_0} b (a w\sharp)^{n_1}\ldots b (a w\sharp)^{n_k} b\big) 
(\$)^\omega$ 
with probability at least $1-\vare$.
It follows that the answer to the limit problem is yes.

\item If for all words $w$ we have $\calb(w)\leq \half$, then we show that 
the answer to the limit problem is no. 
Let $x=\sup_{w \in \alphab^\omega} \calb(w) \leq \frac{1}{2}$.
Consider a word 
$\wh{w}=\big((a w_0\sharp)^{n_0} b (a w_1\sharp)^{n_1}\ldots b (a w_2\sharp)^{n_k} b \big)(\$)^\omega$:
given this word the probability to reach $q_3$ in $q_1$ given $\wh{w}$ is 
at most 
\[
1-\prod_{k=0}^i(1-x^{n_i});
\]
and the probability to reach $q_7$ from $q_5$ given $\wh{w}$ is at least 
\[
\begin{array}{rcl}
(1-x)^{n_1} + (1-(1-x)^{n_1})\cdot(1-x)^{n_2} + \cdots & + & \displaystyle \prod_{k=0}^{i-1}(1-(1-x)^{n_k})\cdot (1-x)^{n_i} \\[1ex]
& = & \displaystyle 1-\prod_{k=0}^i(1-(1-x)^{n_i}).
\end{array}
\]
The argument is similar to the proof of Lemma~\ref{lemm:limit_lemma}.
It follows from the argument of Lemma~\ref{lemm:limit_lemma} that the word 
is accepted with probability at most $\frac{1}{2}$ in $\cala^*$.
Thus it follows that for all words $w$ we have $\cala^*(w) 
\leq \frac{1}{2}$.
Since the quantitative existence problem is undecidable for acceptance 
absorbing reachability automata (Theorem~\ref{thrm2}), 
it follows that the limit problem is also undecidable.
\end{enumerate}

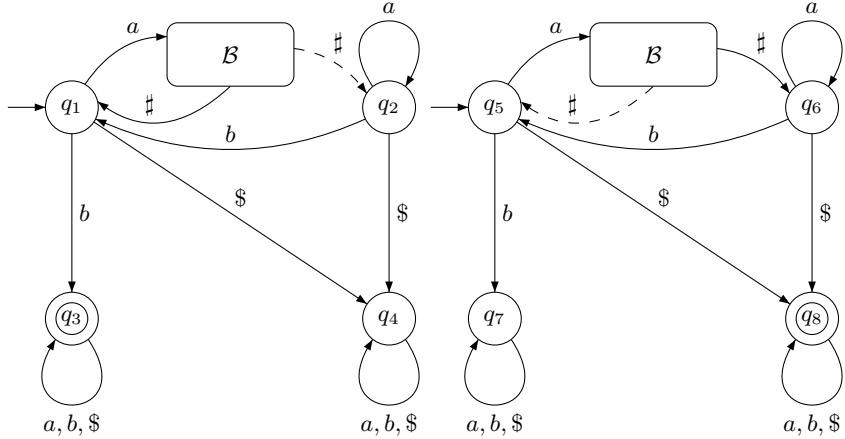
\begin{figure}[!tb]
  \begin{center}
    \unitlength=4pt
    \begin{picture}(25,50)(20,-30)
    \gasset{Nw=5,Nh=5,Nmr=2.5,curvedepth=4}
    \thinlines
    \node[Nmarks=i,iangle=180](A1)(0,0){$q_1$}
    \node(A2)(30,0){$q_2$}
    \drawloop[loopangle=90](A2){$a$}
    \node[Nmarks=r,rdist=1](A3)(0,-20){$q_3$}
    \node(A4)(30,-20){$q_4$}
    \node[Nw=12,Nh=6,Nmr=1](B1)(15,5){$\calb$} 
    \drawloop[loopangle=270](A3){$a,b,\$$}
    \drawloop[loopangle=270](A4){$a,b,\$$}
    \drawedge(A1,B1){$a$}
    \drawedge[syo=0,eyo=-3,dash={1.0}0](B1,A2){$\sharp$} 
    \drawedge[syo=-3,eyo=3,ELside=r](B1,A1){$\sharp$}
    \drawedge[ELside=r](A2,A1){$b$}
    \gasset{curvedepth=0}
    \drawedge(A1,A4){$\$$}
    \drawedge(A2,A4){$\$$}
    \drawedge(A1,A3){$b$}

    \gasset{Nw=5,Nh=5,Nmr=2.5,curvedepth=4}
    \thinlines
    \node[Nmarks=i,iangle=180](A5)(40,0){$q_5$}
    \node(A6)(70,0){$q_6$}
    \drawloop[loopangle=90](A6){$a$}
    \node(A7)(40,-20){$q_7$}
    \node[Nmarks=r,rdist=1](A8)(70,-20){$q_8$}
    \node[Nw=12,Nh=6,Nmr=1](B2)(55,5){$\calb$} 
    \drawloop[loopangle=270](A7){$a,b,\$$}
    \drawloop[loopangle=270](A8){$a,b,\$$}
    \drawedge(A5,B2){$a$}
    \drawedge[syo=0,eyo=-3](B2,A6){$\sharp$} 
    \drawedge[syo=-3,eyo=3,ELside=r,dash={1.0}0](B2,A5){$\sharp$}
    \drawedge[ELside=r](A6,A5){$b$}
    \gasset{curvedepth=0}
    \drawedge(A5,A8){$\$$}
    \drawedge(A6,A8){$\$$}
    \drawedge(A5,A7){$b$}

    \end{picture}
  \end{center}
  \caption{Limit undecidability construction.}\label{fig:limitundec}
\end{figure}

\begin{table}
\begin{center}
\begin{tabular}{|c|c|c|c|c|}
\hline
             & Positive & Almost   & Limit  & Quantitative  Equality \\ 
\hline
Safety &  
Dec. & Dec. & Dec. & Undec. \\ 
\hline
Reachability  &    
Dec. & Dec. & Undec. & Undec. \\ 
\hline
B\"uchi   &    
Undec. & Dec. & Undec. & Undec. \\ 
\hline
coB\"uchi  &   
Dec. & Undec. & Undec. & Undec. \\ 
\hline
Limit-average    &    
Dec. & Undec. & Undec. & Undec. \\
\hline
\hline
\end{tabular}
\end{center}
\caption{Decidability and undecidability results for probabilistic automata. 
}\label{tab2}
\vspace{-2em}
\end{table}

\smallskip\noindent{\bf Undecidability.} From the above proof it 
follows that the limit problem is undecidable for acceptance 
absorbing reachability automata, and hence also for B\"uchi, coB\"uchi, 
and limit-average condition.

\smallskip\noindent{\bf The case of Limit-average Automata.} The proofs for the decidability and undecidability results concerning limit-average conditions differ from the other conditions. We present how to use the results of \cite{tracol2009recurrence} and \cite{tracol-recurrence} to prove that the positive problem for limit-average automata is PSPACE-complete, and the almost problem for limit-average automata is undecidable. Given $\pi\in\Pi$, the authors of \cite{tracol2009recurrence} define:
\[\mathrm{Supp}(\pi)=\lbrace q\in Q\ s.t.\ \mathrm{LimitAvg}(\pi)>0\rbrace\]
In paper \cite{tracol2009recurrence} and \cite{tracol-recurrence} are considered the following decision problems:

\begin{itemize}
 \item \textbf{Positive Strong recurrence Problem: } Given a PA $\mathcal{A}$, decide whether there exists $w\in\Sigma^ \omega$ such that $\mathbb{P}^ w_\mathcal{A}[\lbrace \pi\in\Pi |\ \mathrm{Supp}(\pi)\cap F\not=\emptyset\rbrace]>0$.
 \item \textbf{Almost Sure Strong co-recurrence Problem: } Given a PA $\mathcal{A}$, decide whether there exists $w\in\Sigma^ \omega$ such that $\mathbb{P}^ w_\mathcal{A}[\lbrace \pi\in\Pi\ |\ \mathrm{Supp}(\pi)\cap \overline{F}=\emptyset\rbrace]=1$.

\end{itemize}

The first point of the following theorem was shown in \cite{tracol-recurrence}, and the second in \cite{tracol2009recurrence}, an extended version of \cite{tracol-recurrence}.
\begin{theorem}\label{t-res rec}
The Positive Strong recurrence Problem is PSPACE-complete, and the Almost Sure Strong co-recurrence Problem is undecidable.
\end{theorem}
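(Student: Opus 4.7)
The plan is to handle the two parts separately. Both problems concern the random set $\mathrm{Supp}(\pi)$, which is (almost surely under any $\Prb^w_\cala$) determined by the long-run behavior of the time-inhomogeneous Markov chain induced by $w$. The common technical ingredient is to relate $\mathrm{Supp}(\pi)$ to a graph-theoretic object built from the possible supports of the one-step distribution reached under prefixes of $w$, taking care to distinguish between ``visited infinitely often'' and ``visited with positive asymptotic frequency.''

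For the PSPACE upper bound of the Positive Strong Recurrence Problem, I would follow a subset-construction approach. Define a finite directed graph whose vertices are subsets $S \subseteq Q$ representing the support of the distribution reached after some finite prefix, and whose edges $S \to_\sigma S'$ record the one-letter reachable supports under $\sigma \in \Sigma$. The problem reduces to asking whether there is some cycle in this graph, reachable from $\{q_\ii\}$, that realizes a stationary support intersecting $F$ with positive frequency. Although the subset graph has exponentially many vertices, the existence of such a reachable cycle can be decided in nondeterministic PSPACE by guessing the current subset on the fly, and Savitch's theorem collapses this to PSPACE. For the matching lower bound, the reduction goes from the universality problem for nondeterministic finite automata, exploiting that positive-probability reachability in probabilistic automata simulates nondeterministic reachability, analogous to the PSPACE-hardness constructions of \cite{chadha2009expressiveness}.

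For the undecidability of the Almost Sure Strong co-recurrence Problem, the plan is a reduction from an undecidable qualitative problem on probabilistic automata, most naturally the almost problem for coB\"uchi automata, which is undecidable by \cite{BBG08}. Given an instance $(\cala, F)$ of the coB\"uchi almost problem, I would construct a new probabilistic automaton $\cala'$ in which the event $\mathrm{Supp}(\pi) \cap \overline{F} = \emptyset$ agrees almost surely with the coB\"uchi event $\Inf(\pi) \subseteq F$. The gadget must ensure that any state in $\overline{F}$ visited infinitely often is also visited with positive frequency, so that ``support'' and ``infinitely visited'' coincide on the relevant paths; this can be arranged by inserting self-loops or deterministic delays at $\overline{F}$-states that prevent any recurrent visit from contributing vanishing frequency.

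The main obstacle in both arguments is precisely the gap between $\Inf(\pi)$ and $\mathrm{Supp}(\pi)$. Under an adversarially chosen word, a state may lie in $\Inf(\pi)$ yet contribute zero to $\LimitAvg$, so a naive subset-construction analysis overcounts recurrent behavior and a naive reduction from coB\"uchi fails. Closing this gap requires the finer machinery of \cite{tracol2009recurrence,tracol-recurrence}, which characterizes the ergodic decomposition of the random support under arbitrary word policies and lets one identify exactly which support sets can arise with positive probability; this probabilistic analysis of supports is where the real work of both proofs lies.
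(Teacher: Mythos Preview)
The paper does not prove this theorem. The sentence immediately preceding the statement reads: ``The first point of the following theorem was shown in \cite{tracol-recurrence}, and the second in \cite{tracol2009recurrence}, an extended version of \cite{tracol-recurrence}.'' Theorem~\ref{t-res rec} is imported wholesale from those references and used only as a black box to derive the subsequent corollary on limit-average automata; there is no argument in the present paper for you to compare against.

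As for the content of your sketch: you correctly isolate the central difficulty---the mismatch between $\Inf(\pi)$ and $\mathrm{Supp}(\pi)$---but your proposal does not actually resolve it, and in the final paragraph you explicitly defer to the very references the paper cites. In particular, the gadget you suggest for the undecidability half (``inserting self-loops or deterministic delays at $\overline{F}$-states'') does not obviously work: the infinite word $w$, not the automaton, governs how quickly the process is steered away after each visit, so a fixed-probability self-loop at an $\overline{F}$-state turns each visit into a geometric clump of bounded expected length but does nothing to prevent those clumps from being spaced super-linearly apart, leaving the asymptotic frequency zero. A real proof needs either the ergodic-decomposition analysis of \cite{tracol2009recurrence,tracol-recurrence} or a substantially more careful construction that constrains the adversary's ability to dilute visits; your outline does not supply either.
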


By definition, for any PA $\mathcal{A}$ and any $w\in\Sigma^\omega$, for all $\pi\in\Pi$, we have $\mathrm{Supp}(\pi)\cap F\not=\emptyset$ iff $\mathrm{LimitAvg}(\pi)>0$. Also, for any PA $\mathcal{A}$ and any $w\in\Sigma^\omega$, for all $\pi\in\Pi$, we have that $\mathrm{Supp}(\pi)\cap \overline{F}=\emptyset$ iff $\mathrm{Supp}(\pi)\subseteq F$, iff $\mathrm{LimitAvg}(\pi)=1$. Finally, we get:
\begin{itemize}
 \item The following problem is PSPACE complete: given a PA $\mathcal{A}$, decide whether there exist $w\in\Sigma^ \omega$ such that $\mathbb{P}^ w_\mathcal{A}[\lbrace \pi\ |\ \mathrm{LimitAvg}(\pi)>0=\emptyset\rbrace]>0$.
 \item The following problem is undecidable: given a PA $\mathcal{A}$, decide whether there exist $w\in\Sigma^ \omega$ such that $\mathbb{P}^ w_\mathcal{A}[\lbrace \pi\ |\ \mathrm{LimitAvg}(\pi)=1\rbrace]=1$.
\end{itemize}


\begin{corollary}
The positive problem for limit-average automata is PSPACE-complete, and the almost problem for limit-average automata is undecidable
\end{corollary}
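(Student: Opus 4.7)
The plan is to observe that the corollary follows almost immediately by combining the two reformulated problems stated just above it with the fact that $\mathrm{LimitAvg}$ is a $[0,1]$-valued random variable on $\Pi$. Concretely, for any probabilistic automaton $\cala$ and any word $w$, I would argue two standard equivalences:
\begin{align*}
\Exp^w_\cala(\mathrm{LimitAvg}) > 0 \; &\Longleftrightarrow \; \Prb^w_\cala[\{\pi \mid \mathrm{LimitAvg}(\pi) > 0\}] > 0,\\
\Exp^w_\cala(\mathrm{LimitAvg}) = 1 \; &\Longleftrightarrow \; \Prb^w_\cala[\{\pi \mid \mathrm{LimitAvg}(\pi) = 1\}] = 1.
\end{align*}
The first equivalence uses that $\mathrm{LimitAvg}$ is nonnegative: a nonnegative bounded random variable has positive expectation if and only if it is strictly positive on a set of positive measure. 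The second uses that $\mathrm{LimitAvg} \leq 1$ everywhere: a random variable bounded above by $1$ has expectation $1$ if and only if it equals $1$ almost surely (the ``only if'' direction is Markov applied to $1-\mathrm{LimitAvg}\ge 0$, the ``if'' is trivial).

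Once these two equivalences are in hand, the positive problem for limit-average automata becomes precisely the first bulleted problem right before the corollary, which the text has already identified as PSPACE-complete via Theorem~\ref{t-res rec} and the equivalence $\mathrm{Supp}(\pi)\cap F\neq\emptyset \Leftrightarrow \mathrm{LimitAvg}(\pi)>0$. Similarly, the almost problem becomes exactly the second bulleted problem, shown undecidable via Theorem~\ref{t-res rec} and the equivalence $\mathrm{Supp}(\pi)\cap\overline{F}=\emptyset \Leftrightarrow \mathrm{LimitAvg}(\pi)=1$. PSPACE-hardness of the positive problem can be inherited directly from the PSPACE-hardness half of Theorem~\ref{t-res rec}, since the reduction used there is simply the identity on the automaton.

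There is no real obstacle: the content of the corollary is entirely absorbed by Theorem~\ref{t-res rec} together with the two measure-theoretic equivalences above. The only place one has to be slightly careful is in checking that the set $\{\pi \mid \mathrm{LimitAvg}(\pi) > 0\}$ (resp.\ $\{\pi \mid \mathrm{LimitAvg}(\pi) = 1\}$) is $\Prb^w_\cala$-measurable, which follows because $\mathrm{LimitAvg}$ is a $\liminf$ of measurable functions on the cone $\sigma$-algebra, hence itself measurable. After that, the proof is two lines.
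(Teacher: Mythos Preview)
Your proposal is correct and follows essentially the same route as the paper: both arguments reduce the positive and almost problems to the two bulleted reformulations via the elementary facts that a nonnegative random variable has positive expectation iff it is positive with positive probability, and that a $[0,1]$-valued variable has expectation~$1$ iff it equals~$1$ almost surely (the paper phrases the latter by applying the first fact to $1-\mathrm{LimitAvg}$, which is exactly your Markov argument). Your added remark on measurability and on inheriting PSPACE-hardness via the identity reduction are fine clarifications that the paper leaves implicit.
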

\begin{proof}
Concerning the first point, a standard result of probability theory says that on a probability space $(\Omega,\mathcal{F},\mu)$, given a function $f:\Omega\rightarrow\mathbb{R}^+$, we have:
\[\mathbb{E}^\mu[f]>0\ \mathrm{iff}\ \mu(\lbrace \pi\in\Omega\ s.t.\ f(x)>0\rbrace)>0\]
Given $w\in\Sigma^\omega$, taking $\Omega$ the set of paths $\Pi$, $\mathcal{F}$ as the $\sigma$-algebra generated by cones on $\Omega$, $\mu$ as $\mathbb{P}^w_\mathcal{A}$ and $f$ as the $\mathrm{LimitAvg}$ function, we get that:
\[\mathbb{E}^w_\mathcal{A}[\mathrm{LimitAgv}]>0\ \mathrm{iff}\ \mathbb{P}^w_\mathcal{A}[\lbrace \pi\in\Pi\ s.t.\ \mathrm{LimitAvg}(\pi)>0\rbrace]>0\]
Hence the Positive Limit-average Problem is equivalent to the Positive Strong recurrence Problem, which is PSPACE-complete.

Let us now consider the second point. Let $\overline{\mathrm{LimitAvg}}:\Pi\rightarrow\lbrace0,1\rbrace$ be such that for all $\pi\in\Pi$ we have $\overline{\mathrm{LimitAvg}}(\pi)=1-\mathrm{LimitAvg}(\pi)$. By the same argument as before, given $w\in\Sigma^ \omega$, we have:
\begin{equation}\label{e-first}
 \mathbb{E}^w_\mathcal{A}[\overline{\mathrm{LimitAvg}}]>0\ \mathrm{iff}\ \mathbb{P}^w_\mathcal{A}[\lbrace \pi\in\Pi\ s.t.\ \overline{\mathrm{LimitAvg}}(\pi)>0\rbrace]>0
\end{equation}

Now, $\mathbb{E}^w_\mathcal{A}[\overline{\mathrm{LimitAvg}}]=1-\mathbb{E}^w_\mathcal{A}[\mathrm{LimitAvg}]$. As a consequence, we have that $\mathbb{E}^w_\mathcal{A}[\mathrm{LimitAvg}]=1$ iff $\mathbb{E}^w_\mathcal{A}[\overline{\mathrm{LimitAvg}}]=0$, hence by equation \ref{e-first} 
\[\mathbb{E}^w_\mathcal{A}[\mathrm{LimitAvg}]=1\ \mathrm{iff}\  \mathbb{P}^w_\mathcal{A}[\lbrace \pi\in\Pi\ s.t.\ \overline{\mathrm{LimitAvg}}(\pi)>0\rbrace]=0\]
Which is also equivalent to $\mathbb{P}^w_\mathcal{A}[\lbrace \pi\in\Pi\ s.t.\ \overline{\mathrm{LimitAvg}}(\pi)=0\rbrace]=1$, and to $\mathbb{P}^w_\mathcal{A}[\lbrace \pi\in\Pi\ s.t.\ \mathrm{LimitAvg}(\pi)=1\rbrace]=1$. The remark after theorem \ref{t-res rec} complete then the proof.

\end{proof}

\smallskip\noindent{\bf Summary of results.} We have seen that the limit problem is undecidable for acceptance 
absorbing reachability automata, and hence also for B\"uchi, coB\"uchi, 
and limit-average condition.

This gives us the Theorem~\ref{thrm3}.
For other qualitative questions the results are as follows: 
(A) We argued that for safety condition that almost and limit problem 
coincide, and the decision problems are PSPACE-complete~\cite{chadha2009expressiveness}.
(B) For reachability condition the almost and positive problem 
can be answered through POMDPs (\cite{BBG08}), and the almost problem was 
shown to be PSPACE-complete~\cite{chadha2009expressiveness}.
(C) For B\"uchi condition it was shown in~\cite{BBG08} that the 
positive problem is undecidable and the almost problem is decidable in EXPTIME.
The almost problem was later shown to be PSPACE-complete~\cite{chadha2009power}.
(D) For coB\"uchi condition it was shown in~\cite{BBG08} that the 
almost problem is undecidable and the positive problem is decidable through solution 
of POMDPs with coB\"uchi condition in EXPTIME~\cite{BGG09}.
The positive problem was later shown to be PSPACE-complete in~\cite{chadha2009power}.
(E) For limit-average condition the results follow from the last paragraph.
For the exact arithmetic hierarchy characterization of the undecidable problems of \cite{BBG08} 
see \cite{chadha2009expressiveness,chadha2009power}. The results are summarized in Table~\ref{tab2} 
(the results for quantitative existence and quantitative value problem is the same as 
for quantitative equality problem).

\begin{theorem}[Qualitative problem]\label{thrm3}
The following assertions hold:
(a)~the positive problem is decidable for probabilistic safety, reachability, 
coB\"uchi and limit-average automata, and is undecidable for probabilistic B\"uchi automata;
(b)~the almost problem is decidable for probabilistic safety, reachability, 
and B\"uchi automata, and is undecidable for probabilistic coB\"uchi and limit-average automata;
and
(c)~the limit problem is decidable for probabilistic safety automata and
is undecidable for probabilistic reachability, B\"uchi, coB\"uchi and 
limit-average automata.
\end{theorem}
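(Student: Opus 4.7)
The plan is to assemble Theorem~\ref{thrm3} as a consolidation of results already established in the preceding sections together with the cited literature, condition by condition and problem by problem. Since every single entry of Table~\ref{tab2} has now been addressed somewhere in the paper, the proof will essentially be a bookkeeping exercise that matches each claim to the correct argument.

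For part~(a), the \emph{positive} problem: decidability for safety and reachability automata follows from the POMDP reductions of~\cite{BBG08} (sharpened to PSPACE-completeness in~\cite{chadha2009expressiveness}); decidability for coB\"uchi automata is the EXPTIME POMDP procedure of~\cite{BGG09} (PSPACE-complete by~\cite{chadha2009power}); decidability for limit-average automata is exactly the first bullet of the corollary just established via Theorem~\ref{t-res rec}; and undecidability for B\"uchi automata is~\cite{BBG08}. For part~(b), the \emph{almost} problem: the safety case reduces to the positive case by the lemma proved above showing almost and limit coincide for safety; the reachability case is again the POMDP reduction of~\cite{BBG08}; the B\"uchi case is the EXPTIME algorithm of~\cite{BBG08} (PSPACE-complete by~\cite{chadha2009power}); undecidability for coB\"uchi is~\cite{BBG08}; and undecidability for limit-average is the second bullet of the corollary.

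For part~(c), the \emph{limit} problem, the safety case again collapses to almost via the earlier lemma and is therefore decidable. The rest of part~(c)---undecidability for reachability, B\"uchi, coB\"uchi, and limit-average automata---is precisely what the reduction just carried out delivers: the construction of the automaton $\cala^*$ from an acceptance-absorbing reachability automaton $\calb$, combined with Lemma~\ref{lemm:limit_lemma}, shows that deciding the limit problem for $\cala^*$ would decide the quantitative existence problem for $\calb$, which is undecidable by Theorem~\ref{thrm2}. Since any acceptance-absorbing reachability automaton is simultaneously a B\"uchi, coB\"uchi, and limit-average automaton (with the same acceptance probability thanks to the equivalence $\Reach(F)=\Buchi(F)=\coBuchi(F)=\LimitAvg(F)$ for acceptance-absorbing automata noted in Section~3), the undecidability transfers immediately to those three remaining classes.

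The main obstacle has in fact already been overcome: it was the limit undecidability argument for acceptance-absorbing reachability automata using the construction of Fig~\ref{fig:limitundec} and the technical estimates on $\sum_k x^{n_k}$ and $\sum_k (1-x)^{n_k}$ in Lemma~\ref{lemm:limit_lemma}. Once that reduction is in place, the theorem is just a table look-up: I will write it out by enumerating the five conditions against the three qualitative problems, citing for each cell either the relevant earlier lemma or corollary of this section, or the appropriate reference among~\cite{BBG08,BGG09,chadha2009expressiveness,chadha2009power,tracol2009recurrence,tracol-recurrence}, with no further calculation required.
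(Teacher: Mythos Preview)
Your proposal is correct and mirrors the paper's own treatment: Theorem~\ref{thrm3} is stated after a ``Summary of results'' paragraph that does exactly the bookkeeping you describe, citing \cite{BBG08,BGG09,chadha2009expressiveness,chadha2009power} for the positive and almost cells, invoking the corollary built on Theorem~\ref{t-res rec} for limit-average, and using the $\cala^*$ reduction plus Lemma~\ref{lemm:limit_lemma} together with the acceptance-absorbing equivalence $\Reach(F)=\Buchi(F)=\coBuchi(F)=\LimitAvg(F)$ for the limit undecidability. One small slip: for the safety almost problem you write that it ``reduces to the positive case by the lemma showing almost and limit coincide,'' but that lemma equates almost with \emph{limit}, not with positive; the decidability (PSPACE-completeness) of the safety almost problem is taken directly from~\cite{chadha2009expressiveness}, and the lemma is then used the other way around to conclude decidability of the safety \emph{limit} problem.
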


\vspace{-1em}

\end{document}